\newtheorem{thm}{Theorem}[section]
 \newtheorem{cor}[thm]{Corollary}
 \newtheorem{lem}[thm]{Lemma}
 \newtheorem{prop}[thm]{Proposition}
 \newtheorem{defn}[thm]{Definition}
\newcommand{\qed}{\hfill \mbox{$\Box$}}
\newenvironment{proof}{\vspace{1ex}\noindent{\it Proof}\hspace{0.5em}}
	{\hfill\qed\vspace{1.5ex}\newline}
\newcommand{\sgn}{\mbox{${\rm sgn}$}}
\newcommand{\haken}[1]{\mbox{$\{1,\dots,#1\}$}}             
\newcommand{\card}[1]{\mbox{${\rm card}\,#1$}}
\title{Dynamical and Structural Modularity of\\ Discrete Regulatory Networks}
\author{Heike Siebert
\institute{DFG Research Center {\sc Matheon}, Freie Universit\"at Berlin\\
Arnimallee 6\\
D-14195 Berlin, Germany}
\email{siebert@mi.fu-berlin.de}
}
\begin{document}
\maketitle

\begin{abstract}
A biological regulatory network can be modeled as a discrete function $f$ that contains all available information on network component interactions. From $f$ we can derive a graph representation of the network structure as well as of the dynamics of the system. In this paper we introduce a method to identify modules of the network that allow us to construct the behavior of $f$ from the dynamics of the modules. Here, it proves useful to distinguish between dynamical and structural modules, and to define network modules combining aspects of both.
As a key concept we establish the notion of symbolic steady state, which basically represents a set of states where the behavior of $f$ is in some sense predictable, and which gives rise to suitable network modules.
We apply the method to a regulatory network involved in T helper cell differentiation.
\end{abstract}
\section{Introduction}
Qualitative methods present a rigorous mathematical framework for modeling biological systems for which experimental data needed to determine kinetic parameters and mechanisms is lacking. The components of the system are modeled as variables adopting only finitely many values, so-called activity levels. In the simplest case, we obtain a Boolean representation, where the values 0 and 1 may for example represent a gene being inactive or active. 
In the general case, each component can have several activity levels, which may be appropriate depending on the biological data, and often is useful when modeling components that influence several other network components. A vector assigning each component an activity level then represents a state of the system. The information about network structure as well as the logical rules governing the behavior of the system in state space is represented by a discrete function $f$.

Although discrete networks are a strongly simplified representation of the original system, complex networks are hard to analyze, not least because the state space grows exponentially with the number of components.
So, methods to reduce the complexity of the analysis are of great interest. One approach is to deconstruct the network in smaller building blocks that can be analyzed more easily, which leads to the notion of network modularity.

The idea of decomposing networks into modules is well-established in systems biology, although the notion of network module is not clear-cut. Often modules are defined based on biological criteria, that have to be translated into mathematical properties in order to identify them in a mathematical model (see e.\,g. \cite{Hartwell,Ederer,Papin}). Other approaches focus purely on the the graph representation of the network structure. Modules are defined as subgraphs satisfying graph theoretical characteristics often related to connectivity \cite{Albert2}, or with statistical significance in comparison with random networks \cite{Newman,Alon}. In addition to this structural view, there are also approaches to find dynamical modules, see e.\,g.\ \cite{Irons}, that focus on identifying behavioral characteristics. However, in general the results obtained by analyzing such modules in isolation do not translate to the original network, since additional influences have to be taken into account once the module is re-embedded in the original system.
Here, the key is finding conditions that allow to draw conclusions about a complex network from knowledge obtained from module analysis, as e.\,g. possible in the modular response analysis approach in the context of metabolic networks and steady state fluxes \cite{Kholodenko, Bruggeman}.

In this paper, we focus on the discrete modeling approach, presenting a method to identify \emph{network modules} that allow us to derive precise information on the dynamics of the original system from the results of the analysis of the modules, building on ideas and significantly extending results from \cite{MCS,Siebert09-AB}. In particular, we show that we can explicitly construct attractors of the original systems from network module attractors. Here, modularity is a key concept, and we exploit a purely structural as well as a purely dynamical view of modularity to eventually determine network modules combining important aspects of both. The core notion in our method is that of \emph{symbolic steady state}. Such a state represents a set of constraints on the activity levels of the network components that allows us on the one hand to focus on dynamics restricted to subsets of state space, on the other hand enables us to identify dynamical and structural modules that render the basis for defining suitable network modules.

The paper is organized as follows. In the next section we describe the discrete modeling formalism we use throughout the paper, and introduce structural, dynamical and network modules. In Sect.~\ref{sss} we establish the notion of symbolic steady state as well as related concepts. This is followed by the main results concerning network analysis utilizing modules in Sect.~\ref{ana}. We then illustrate the results for a class of networks, namely networks with input layer. In Sect.~\ref{Tcellsect} we apply the method to the analysis of a regulator network involved in T helper cell differentiation proposed in \cite{Mendoza}. We close with conclusions and perspectives.
\section{Discrete regulatory networks}\label{prelim}
In this paper we model regulatory systems as discrete functions which capture all available information about network interactions and the logical rules governing the behavior of the system. Throughout the text, let us consider a system consisting of $n\in\mathbb{N}$ network components $\alpha_1,\dots,\alpha_n$. In the following we identify a component $\alpha_i$ with its index $i$ to simplify notation. Each component is interpreted as a variable which takes integer values that represent the different activity levels of the component. The literal meaning of those levels may be very different for different network components, for example they can represent levels of substance concentration, gene activity, presence or absence of a signal and so on. 
A vector assigning each component an activity level represents a state of the system, and the dynamics of the system is represented by state changes due to component interactions.
\begin{defn}\label{nw}
For all $i\in\haken{n}$, let $p_i\in\mathbb{N}$, and set $X_i=\{0,1,\dots,p_i\}$. Set $X=X_1\times\dots\times X_n$, and let $f=(f_1,\dots,f_n):X\to X$ be a function.
We call $f$ a \emph{network} comprising $n$ components. For each $i\in\haken{n}$, the value $p_i$ is the \emph{maximal activity level} of $\alpha_i$, and $X_i$ is called the \emph{range} of $\alpha_i$. The set $X$ is called the \emph{state space} of $f$.
\end{defn}
Each coordinate function $f_i$ of $f$ describes the rules governing the behavior of the $i$-th network component depending on the state of the system. But $f$ carries not only dynamical but also structural information on the system. Both aspects can be represented by directed graphs derived from $f$ as we will see in the following two subsections. In the remainder of the paper $f$ denotes a network as introduced in Def.~\ref{nw}
\subsection{Structure}
We represent the structure of a network by a signed directed (multi-)graph, where vertices represent the network components, and an edge from $\alpha_i$ to $\alpha_j$ signifies that the value of $f_j$ depends on the activity level of $\alpha_i$. The sign of the edge represents the character, i.\,e., activating or inhibiting, of the interaction. This description is inherently local in nature, so we first introduce a structural representation depending on the state of the system. This notion was introduced for Boolean functions in \cite{RRT} and is used for multi-value functions in the form considered here in \cite{RichardArxiv}.
\begin{defn}\label{localig}
Let $x\in X$. By $G(f)(x)$ we denote the directed signed (multi-)graph with vertex set $V=\{\alpha_1,\dots,\alpha_n\}$ and edge set $E(x)\subseteq V\times V\times\{+,-\}$. An edge $(i,j,\varepsilon)$ belongs to $E(x)$ iff there exists $c_i\in\{-1,+1\}$ such that $x_i+c_i\in X_i$ and
\[
\sgn\,\,\left(\frac{f_j((x_1,\dots,x_{i-1},x_i+c_i,x_{i+1},\dots,x_n))-f_j(x)}{c_i}\right)=\varepsilon\,,
\]
where the function $\sgn:\mathbb{Z}\to\{+,-,0\}$ satisfies $\sgn(0)=0$, $\sgn(z)=\,+$\, if $z>0$, and $\sgn(z)=-$\, if $z<0$.
We call $G(f)(x)$ the \emph{local interaction graph of $f$ in $x$}.
\end{defn}
The local interaction graph in $x$ is closely related to the discrete Jacobian matrix as introduced in \cite{Robert} in the Boolean case. Note that in the multi-value other than in the Boolean case it is possible that $G(f)(x)$ contains parallel edges. There are at most two parallel edges from one vertex to another which then have opposite sign.

The local definition is easily extended, if we are interested in a representation of the interactions influencing the system behavior in larger subsets of state space.
\begin{defn}\label{IG}
Let $Y\subseteq X$. We denote by $G(f)(Y)$ the union of the graphs $G(f)(x)$, $x\in Y$. We denote the graph $G(f)(X)$ also by $G(f)$ and call it the \emph{global interaction graph of $f$}.
\end{defn}
In Fig.~\ref{Ex} (b) we see the global interaction graph of the network defined in Fig.~\ref{Ex} (a). The local interaction graph in the state $(1,1,0)$ is shown in Fig.~\ref{symb} (a). To simplify notation we often write $G(x)$ and $G(Y)$ instead of $G(f)(x)$ and $G(f)(Y)$, respectively, if the corresponding function $f$ is clear from the context.

When analyzing interaction graphs we are in particular interested in modules of the graph, a term for which there exists a variety of definitions as mentioned in the introduction. For our purposes it is convenient to use the term in the broadest sense, initially.
\begin{defn}
A directed (multi-)graph $G'=(V^{G'}, E^{G'})$ is called a subgraph of a directed (multi-)graph $G=(V^G,E^G)$ if $V^{G'}\subseteq V^G$, $E^{G'}\subseteq E^G$, and every edge in $E^{G'}$ has both end-vertices in $V^{G'}$.
We call a subgraph of $G(f)$ \emph{structural module of $f$}.
\end{defn}
Note that for a structural module the vertex set, the edge set or both may be smaller than for $G(f)$. For example, the graph shown in Fig.~\ref{Ex} (d) is a structural module of the function $f$ given in the same figure. In general, local interaction graphs are structural modules of $f$. Of course, this definition is not very useful for analyzing the network structure or finding characteristics of the system. However, it is of conceptual advantage for us in the endeavor of defining network modules that combine structural and dynamical characteristics.
\begin{figure}[tb]
\begin{center}
\includegraphics[width=.9\textwidth]{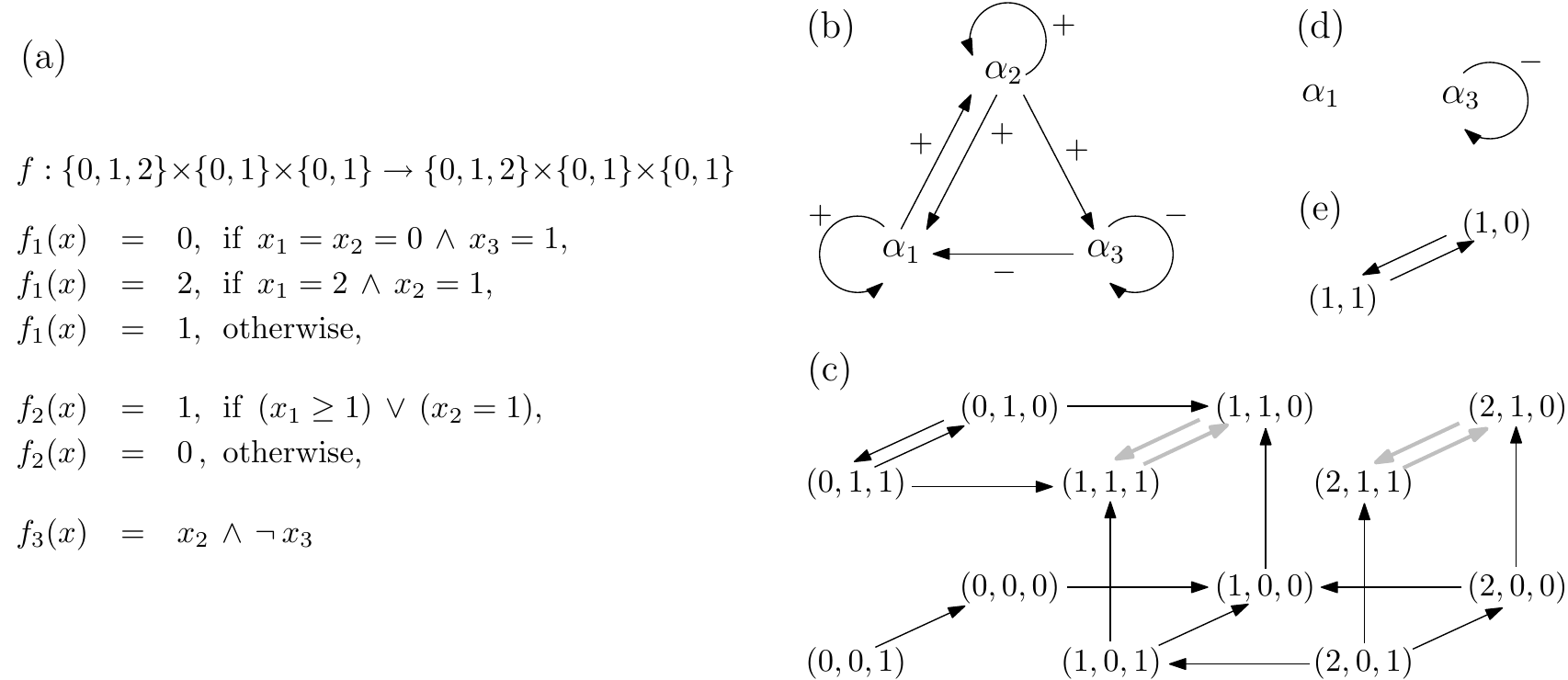}
\caption{\label{Ex} A network (a), its global interaction graph (b), and its state transition graph (c), where attractors are indicated by fat gray edges. Logical disjunction, conjunction and negation are represented by $\vee$, $\wedge$, and $\neg$\,, respectively. In (d) a structural, in (d) a dynamical module of $f$.}
\end{center}
\end{figure}
\subsection{Dynamics}
There are different approaches to deriving the dynamics of $f$. Commonly used is the so-called synchronous update strategy, where the successor of a state $x$ is its image under $f$. A more realistic assumption is that not all changes indicated by differences in component values of $x$ and $f(x)$ take the same amount of time to be executed, since they may represent very different biological processes. However, we lack the information to decide which of those processes of activity level change is the fastest. Therefore, all possible state transitions are taken into account leading to a non-deterministic representation of the dynamical behavior. Furthermore, we assume that a component value changes only by absolute value one in each transition, even though the function value may indicate a bigger change. This update method is called \emph{asynchronous update} \cite{ThomasDAri,ThomasChaos}.
\begin{defn}\label{STG}
We denote by $S(f)$ the directed graph with vertex set $X$ and edge set $E(S(f))$ defined as follows. An edge $(x,x')$ is in  $E(S(f))$ for states $x=(x_1,\dots,x_n),x'=(x'_1,\dots,x'_n)\in X$ if and only if $x'=f(x)=x$ or $x'_i=x_i+\sgn(f_i(x)-x_i)$ for some $i\in\haken{n}$ satisfying $x_i\neq f_i(x)$, and $x_j'=x_j$ for all $j\neq i$. We call $S(f)$ the \emph{asynchronous state transition graph of $f$}.
\end{defn}
To analyze state transition graphs we use, in addition to standard terminology from graph theory such as paths and cycles, the following concepts.
\begin{defn}
An infinite path $(x^0,x^1,\dots)$ in $S(f)$ is called \emph{trajectory}.
A nonempty set of states $D$ is called \emph{trap set} if every trajectory starting in $D$ never leaves $D$.
A trap set $A$ is called \emph{attractor} if for all $x^1,x^2\in A$ there is a path from $x^1$ to $x^2$ in $S(f)$.  Attractors of cardinality greater than one are called \emph{cyclic} attractors. A state $x$ is called \emph{steady state}, if there exists an edge $x\to x$, i.\,e. if $f(x)=x$.
\end{defn}
It is easy to see that each trap set contains at least one attractor, and that
attractors are the terminal strongly connected components of $S(f)$. They represent asymptotically stable behavior and often have clear biological meaning.

In Fig.~\ref{Ex} (c) we see the state transition graph for the network introduced in (a). The system has two cyclic attractors, namely $\{(1,1,0),(1,1,1)\}$ and $\{(2,1,0),(2,1,1)\}$.

As for the structural graph, we can define modules of the state transition graph as subgraphs, i.\,e. sets of states and corresponding state transitions representing fractions of the system's dynamics. However, it may also be of interest to only focus on the behavior of a subset of network components, which we can derive from the state transition graph by projection.
\begin{defn}\label{dynmod}
Let $S'=(Y,E(S'))$ be a subgraph of $S(f)$. Let $\pi^I:X\to\prod_{i\in I}X_i$ be the projection on the components in the ordered set $I\subseteq\haken{n}$. We define $\pi^I(S')$ as the graph with vertex set $\pi^I(Y)$ and edges $\pi^I(v^1)\to\pi^I(v^2)$ for $v^1,v^2\in Y$ such that there exists an edge $v^1\to v^2$ in $S(f)$, and $v^1=v^2$ or $v^2_i=v^1_i+\sgn(f_i(v^1)-v^1_i)$ for some $i\in I$.
We call $\pi^I(S')$ \emph{dynamical module of $f$}.
\end{defn}
Any subgraph of $S(f)$ is a dynamical module of $f$, whith $I$ in the above definition chosen as the set $\{1,\dots,n\}$. Fig.~\ref{Ex} (e) shows a dynamical module of the function $f$ given in the same figure.
Here, we choose the subgraph $S'$ consisting of the cyclic attractor $\{(1,1,0),(1,1,1)\}$ and the corresponding edges between the two attractor states. Then the dynamical module $\pi^{\{1,3\}}(S')$ of $f$ is the graph in Fig.~\ref{Ex} (e).

Again, we do not incorporate any restrictions in the definition that ensure a significance of the modules, as e.\,g.\ in the approach in \cite{Irons} where the authors focus on projected dynamics that are independent of the behavior of the rest of the system.
\subsection{Network modules}
As mentioned in the introduction, we are looking for subnetworks of $f$ that are on the one hand easier to analyze than $f$ itself, and on the other hand carry information of importance for understanding the original system. We define modules of the network $f$ as follows.
\begin{defn}\label{nwmoddef}
We call a function $g$ a \emph{network module of $f$}, if there exist
$Y\subseteq X$ and an ordered set $I\subseteq\haken{n}$ such that $g:\pi^I(Y)\to\pi^I(X)$ satisfies
\[
\pi^I\circ f|_{Y}=g\circ\pi^I|_{Y},
\] where $|_{Y}$ denotes the restriction of a function to the set $Y$.\newline
We call $g$ \emph{autonomous}, if there exist integer intervals $Z_i=\{a_i,a_i+1,\dots,b_i\}$, $a_i\leq b_i$,  for all $i\in\haken{k}$, $k=\card{I}$, such that $\pi^I(Y)=Z_1\times\dots\times Z_k$, and if $g(\pi^I(Y))\subseteq\pi^I(Y)$.
\end{defn}
Let us again illustrate the notion using the example introduced in Fig.~\ref{Ex}. For $Y=\{(1,1,0),(1,1,1)\}$, we have $f(Y)=\{(1,1,0),(1,1,1)\}$. If we set $I=\{1,3\}$ and $g:\{1\}\times\{0,1\}\to\{0,1,2\}\times\{0,1\}$ with $g((1,1))=(1,0)$ and $g((1,0))=(1,1)$, then $\pi^I(f(x))=g(\pi^I(x))$ for $x\in Y$. Since $g(\pi^I(Y))\subseteq\pi^I(Y)$ holds, $g$ is an autonomous network module of $f$. Note that the set $Y$ is a set on which the behavior of $f$ can in some sense be characterized by the behavior of the components in $I$. If we add e.\,g.\ the state $(1,0,0)$ to the set $Y$ in our example, then there is no function $g$ satisfying $\pi^I\circ f|_{Y}=g\circ\pi^I|_{Y}$, since $\pi^I(1,0,0)=\pi^I(1,1,0)$, but $\pi^I(f(1,0,0))=(1,0)\neq(1,1)=\pi^I(f(1,1,0))$. That is, we cannot distinguish the behavior of $f$ in states $(1,0,0)$ and $(1,1,0)$ if we only have information on the components in $I$.

In general, network modules represent rather local aspects of the network in the sense that they describe the influences acting on a subset of components in a set of states. However, the information inherent in a network module does not necessarily suffice for determining dynamics beyond a single transition step. The second condition for autonomous network modules $g$ allows to apply $g$ iteratively on states in $\pi^I(Y)$, while the first ensures that we can derive trajectories according to the asynchronous update rule in a projection of state space. Moreover, the first condition allows to apply Def.~\ref{localig} to $g$. Thus, for autonomous network modules $g$ we can determine an interaction graph $G(g)$ and a state transition graph $S(g)$. By abuse of notation we also denote $G(g)$ the graph derived from $G(g)$ by renaming the vertices $1,\dots,k$ of $G(g)$ with the indices in $I$ while preserving the order. This allows us to identify the interaction graph of $g$ with a subgraph of $G(f)$.
\begin{lem}\label{aut}
Let $g$ be an autonomous network module as introduced in Def.~\ref{nwmoddef}. Then $G(g)$ is a structural and $S(g)$ is a dynamical module of $f$.
\end{lem}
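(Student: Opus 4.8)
The plan is to read the defining identity of a network module coordinatewise and use it as a factorization. Writing the ordered index set as $I=\{i_1,\dots,i_k\}$, the equation $\pi^I\circ f|_{Y}=g\circ\pi^I|_{Y}$ says precisely that $f_{i_l}(x)=g_l(\pi^I(x))$ for every $x\in Y$ and every $l\in\haken{k}$; that is, along $Y$ the coordinate functions of $f$ indexed by $I$ factor through $\pi^I$ and $g$. Autonomy is exactly what makes $G(g)$ and $S(g)$ meaningful: since $\pi^I(Y)=Z_1\times\dots\times Z_k$ is a product of intervals and $g(\pi^I(Y))\subseteq\pi^I(Y)$, Def.~\ref{localig} and Def.~\ref{STG} apply to $g$, and after renaming vertex $m$ to $\alpha_{i_m}$ the vertex set of $G(g)$ is $\{\alpha_i\mid i\in I\}\subseteq V$, while the vertex set of $S(g)$ is $\pi^I(Y)$.

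For the structural claim I would show that every edge of $G(g)$ is an edge of $G(f)=G(f)(X)$. Fix an edge $(i_m,i_l,\varepsilon)$ of $G(g)$, witnessed at some $w\in\pi^I(Y)$ with direction $c\in\{-1,+1\}$, so that $w_m+c\in Z_m$ and $\sgn\bigl((g_l(w+c\,e_m)-g_l(w))/c\bigr)=\varepsilon$ (here $e_m$ is the $m$-th unit vector). Choose a preimage $x\in Y$ with $\pi^I(x)=w$; since $x_{i_m}=w_m$ and $w_m+c\in Z_m\subseteq X_{i_m}$ (using $\pi^I(Y)\subseteq\pi^I(X)$), the state $x'$ obtained from $x$ by changing its $i_m$-th coordinate to $x_{i_m}+c$ lies in $X$. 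The factorization gives $f_{i_l}(x)=g_l(w)$, and, once $x'\in Y$ is secured, $f_{i_l}(x')=g_l(w+c\,e_m)$; hence the sign of the corresponding $f_{i_l}$-difference in direction $i_m$ equals $\varepsilon$, so $(i_m,i_l,\varepsilon)\in G(f)(x)\subseteq G(f)$. Together with the vertex inclusion this exhibits $G(g)$ as a subgraph of $G(f)$, i.e.\ a structural module.

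For the dynamical claim I would take $S'$ to be the subgraph of $S(f)$ induced on $Y$ and argue $S(g)=\pi^I(S')$. The vertex sets agree, both being $\pi^I(Y)$. A non-trivial transition $w\to w+c\,e_m$ of $S(g)$ has $c=\sgn(g_m(w)-w_m)\neq 0$; lifting $w$ to a preimage $x\in Y$ and using $\sgn(f_{i_m}(x)-x_{i_m})=\sgn(g_m(w)-w_m)=c$, the single-coordinate update of the $i_m$-th entry of $x$ is an asynchronous transition of $S(f)$ in a direction $i_m\in I$, which projects back to $w\to w+c\,e_m$ — again provided the target state lies in $Y$, so that the transition belongs to $S'$. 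Conversely, any edge of $\pi^I(S')$ stems from an $S(f)$-edge between states of $Y$ whose changed coordinate lies in $I$ (or from a steady state $v$ with $f(v)=v$, which forces $g(\pi^I(v))=\pi^I(v)$ and hence projects to a steady state of $g$); by the factorization its projection is exactly an asynchronous update of $g$, while transitions changing a coordinate outside $I$ contribute no edge by Def.~\ref{dynmod}. This yields $S(g)=\pi^I(S')$, a dynamical module.

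The main obstacle is the recurring lifting step flagged in both parts: the identity $f_{i_l}|_{Y}=g_l\circ\pi^I$ holds only on $Y$, yet a single edge of $G(g)$ or transition of $S(g)$ must be certified by perturbing or updating one coordinate, and one needs the perturbed state to remain in $Y$. The autonomy hypothesis constrains only the projection $\pi^I(Y)=Z_1\times\dots\times Z_k$, not $Y$ itself, so this does not come for free. I would isolate it as a separate lemma, showing that for $w\in\pi^I(Y)$ one can select a preimage $x\in Y$ together with a matched preimage of $w+c\,e_m$ differing from $x$ only in the coordinate $i_m$; the full-product structure of $\pi^I(Y)$ together with the well-definedness of $g$ (which forces $f_{i_l}$ to be constant on the $I$-fibres of $Y$) are the tools I would bring to bear here. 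Everything else then reduces to routine bookkeeping with signs and projections.
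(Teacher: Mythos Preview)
Your argument follows the paper's exactly: the same coordinatewise identity $f_{i_l}(y)=g_l(\pi^I(y))$ for $y\in Y$, and the same choice of $S'$ as the subgraph of $S(f)$ induced on $Y$. Where the paper simply asserts that applying this identity to Def.~\ref{localig} and Def.~\ref{STG} ``easily renders'' both claims, you correctly isolate the lifting step as the crux: the factorization is only available on $Y$, and a single-coordinate perturbation of a point of $Y$ need not stay in $Y$, since autonomy constrains only $\pi^I(Y)$.

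However, the auxiliary lemma you propose cannot be proved, and in fact the structural claim fails at the stated generality. Take $n=2$, $X_1=X_2=\{0,1\}$, $I=\{1\}$, $Y=\{(0,0),(1,1)\}$, and $f_1(x)=x_2$ (with $f_2$ arbitrary). Then $\pi^I(Y)=\{0,1\}$ is an interval, $g(0)=f_1(0,0)=0$ and $g(1)=f_1(1,1)=1$, so $g=\mathrm{id}$ is an autonomous network module whose interaction graph carries the edge $(\alpha_1,\alpha_1,+)$; but $f_1$ does not depend on $x_1$, so $G(f)$ has no self-loop on $\alpha_1$ and $G(g)$ is not a structural module. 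Your matched-preimage lemma fails here because the only preimages of $0$ and $1$ in $Y$ differ in both coordinates. With $f_1(x)=1-x_2$ and $f_2(x)=x_2$ the same $Y$ also breaks $S(g)=\pi^I(S')$ for the induced $S'$: $S(g)$ has the edges $0\to 1\to 0$, while no $S(f)$-edge joins two states of $Y$. What actually makes the argument go through in every application the paper makes of this lemma (e.g.\ Lemma~\ref{nwmod}, where $Y=M$) is that $Y$ itself is a product of intervals; under that extra hypothesis, $x\in Y$ and $w_m+c\in Z_m$ give $x+ce_{i_m}\in Y$ immediately, and both halves of your argument are then complete as written.
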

\begin{proof}
As already mentioned, we associate each $i\in I$ with $l^i\in\{1,\dots,k\}$ via an order-preserving mapping, and rename each vertex of $G(g)$ with indices in $I$ according to this mapping.
Obviously, the vertex set of $G(g)$ resp.\ $S(g)$ is a subset resp.\ a projection via $\pi^I$ of a subset of the vertex sets of $G(f)$ resp.\ $S(f)$. Let $i\in I$, and choose $l^i\in\haken{k}$ as above, i.\,e., $\pi^i$ maps the $i$-th component of a state $x\in X$ to the $l^i$-th component in $\pi^I(X)$. Then we have $f_i(y)=(\pi^I(f(y)))_{l^i}=g_{l^i}(\pi^I(y))$ for all $y\in Y$.  Application of this equation to the conditions defining edges in Def.~\ref{localig} and \ref{STG} easily renders that each edge in $G(g)$ is also an edge of $G(f)$, and that $S(g)=\pi^I(S')$, where $S'$ denotes the subgraph of $S(f)$ with vertex set $Y$ and edges $y^1\to y^2$ of $S(f)$ with $y^1,y^2\in Y$.
\end{proof}
For the network module $g$ as defined as illustration for Def.~\ref{nwmoddef} the state transition graph is shown in Fig.~\ref{Ex} (e). We rename the vertex set $\{1,2\}$ of $G(g)$ with $I=\{1,3\}$. The graph $G(g)$ then consists of the vertices $\alpha_1$ and $\alpha_3$ and a negative loop on $\alpha_3$ as shown in Fig.~\ref{Ex}(d).

In the following sections, we focus on developing a method to determine network modules useful in the analysis of $f$.
\section{Symbolic steady states and frozen components}\label{sss}
Often network components are involved in a number of specific tasks.
Thus, although a network component may have a large range, only subsets of the range may be of interest when focusing on specific network behavior. To exploit this observation, we introduce the following notation. Here, we call the set $[a_i,b_i]:=\{a_i,a_i+1,\dots,b_i-1,b_i\}\subseteq X_i$ a discrete interval, if $a_i\leq b_i$, with $[a_i,a_i]:=\{a_i\}$. In the following, we identify $\{a_i\}$ with $a_i$ for all $a_i\in X_i$, $i\in\haken{n}$, and call $a_i$ \emph{regular value}. We will use intervals of cardinality greater than one instead of regular component values, if we do not have enough information to determine the exact component value. Following the terminology in \cite{MCS,Siebert09-AB}, we call intervals $[a_i,b_i]$ with $a_i<b_i$ \emph{symbolic values}.

We now need to integrate symbolic values in the dynamical analysis. Here, we generalize ideas from \cite{MCS,Siebert09-AB}.
\begin{defn}
For every $i\in\haken{n}$ let $X_i^{\Box}$ denote the set $\{[a_i,b_i]\subseteq X_i\mid a_i\leq b_i\}$ of discrete intervals in the range $X_i$. Set $X^{\Box}=X_1^{\Box}\times\dots\times X_n^{\Box}$. We call elements in $X$ \emph{regular}, elements in $X^\Box\setminus X$ \emph{symbolic states}. By $J(M)$ we denote the set of all symbolic valued components of $M$ for $M\in X^\Box$. Define
\[
F:X^{\Box}\to X^{\Box},\,\,M\mapsto(F_1(M),\dots,F_n(M))\,\mbox{ with }\,\, F_i(M)=[\,\min_{x\in M}f_i(x),\max_{x\in M}f_i(x)\,]\,\mbox{ for all }\,\,i\in\haken{n}.
\]
We call a state $M\in X^\Box\setminus X$ satisfying $F(M)=M$ \emph{symbolic steady state}.
\end{defn}
The elements of $X^\Box$ are subsets of $X$. The functions $f$ and $F$ coincide on the set $X$ of regular states which we identify with the elements of $X^\Box$ of cardinality one.
In general, if a component function value $f_i(M)$ is regular, this means there is enough information inherent in $M$ to exactly specify its value, while a symbolic value represents the fact that we have not enough information to do so. However, it may be possible to at least derive some constraint for the function value represented by the interval boundaries. For our running example given in Fig.~\ref{Ex} the state $([1,2],1,[0,1])=([1,2],1,X_3)=F(([1,2],1,X_3))$ is a symbolic steady state, where we can determine $F_2(([1,2],1,[0,1]))=1$ exactly, obtain the constraint that the first component cannot have value 0, but have no information on the third component.

We are particularly interested in regular components of a symbolic state $M$ that remain fixed on all trajectories starting in $M$. 
Keeping in mind that we consider the asynchronous update strategy, we can find a superset of the set of states reachable from $M$ by the following procedure. We define $\tilde{M}^0:=M$ and $\tilde{M}^k_j:=[\,\min (\tilde{M}^{k-1}_j\cup F_j(\tilde{M}^{k-1})),\,\max (\tilde{M}^{k-1}_j\cup F_j(\tilde{M}^{k-1}))\,]$ for all $k\in\mathbb{N}$. Since the boundaries of the intervals $M_j^k$ decrease resp.\ increase monotonously and are bounded by 0 resp.\ the maximal activity level $p_j$, the sequence $(\tilde{M}^k)_{k\in\mathbb{N}_0}$ converges to a symbolic state $\widetilde{M}$ representing a superset of the set of from $M$ in $S(f)$ reachable states. In particular, no trajectory starting in $\widetilde{M}$ can leave $\widetilde{M}$. We call $\widetilde{M}$ \emph{extended forward orbit of $M$}.
The next definition is in reference to the notion of frozen cores in random Boolean networks introduced by S. Kaufman \cite{Kauffman}.
\begin{defn}
Let $i\in\haken{n}$. If $M$ is a symbolic state with regular component $M_i$ such that $\widetilde{M}_i=M_i$ for the extended forward orbit $\widetilde{M}$ of $M$, then we say that the $i$-th network component is a \emph{frozen component of $M$}, or a component \emph{frozen to value $M_i$}.
The set $I$ of all frozen components of a symbolic state $M$ is called \emph{frozen core} of $M$, and is denoted by $(I,M)$.
\end{defn}
If a component $j$ of a symbolic state $M$ has symbolic value $X_j$, then of course the $j$-th component of the extended forward orbit is also $X_j$. For the example network in Fig.~\ref{Ex} the symbolic state $M=(X_1,1,X_3)$ coincides with its extended forward orbit. Thus, the frozen core of $M$ is given by $(\{2\}, (X_1,1,X_3))$.

Clearly, the frozen core of a symbolic steady state coincides with its set of regular components. Moreover, we can use the frozen core of a symbolic state to obtain a symbolic steady state, as the next statement shows.
\begin{thm}\label{frozen}
Let $(I,M')$ be the frozen core of a symbolic state $M'\in X^\Box$. Set $M^0:=\widetilde{M'}$ and $M^k:=F(M^{k-1})$ for all $k\in\mathbb{N}$. Then $(M^k)_{k\in\mathbb{N}}$ converges to a regular or a symbolic steady state $M$. We call $M$ the (symbolic) steady state derived from $(I,M')$.
\end{thm}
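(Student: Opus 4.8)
The plan is to show that the iteration $M^k=F^{k}(\widetilde{M'})$ is monotonically \emph{decreasing} with respect to componentwise interval inclusion, and that a monotone sequence in the finite set $X^\Box$ must stabilize at a fixed point of $F$, which is precisely a regular or symbolic steady state.

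Two observations drive the argument. First, I would establish that $F$ is \emph{inclusion-monotone}: if $M,N\in X^\Box$ satisfy $M_i\subseteq N_i$ for all $i$ (equivalently $M\subseteq N$ as subsets of $X$), then $\min_{x\in M}f_i(x)\geq\min_{x\in N}f_i(x)$ and $\max_{x\in M}f_i(x)\leq\max_{x\in N}f_i(x)$, so $F_i(M)\subseteq F_i(N)$ for every $i$, i.e.\ $F(M)\subseteq F(N)$. Second, I would read off from the construction of the extended forward orbit that $F(\widetilde{M'})\subseteq\widetilde{M'}$: at the fixed point of the extension procedure we have $\widetilde{M'}_j=[\min(\widetilde{M'}_j\cup F_j(\widetilde{M'})),\max(\widetilde{M'}_j\cup F_j(\widetilde{M'}))]$ for each $j$, and since $\widetilde{M'}_j$ is already an interval this forces $F_j(\widetilde{M'})\subseteq\widetilde{M'}_j$.

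Combining the two, I would argue by induction that $M^{k+1}\subseteq M^k$ for all $k$: the base case is $M^1=F(M^0)=F(\widetilde{M'})\subseteq\widetilde{M'}=M^0$, and applying the monotone map $F$ to $M^k\subseteq M^{k-1}$ yields $M^{k+1}\subseteq M^k$. Writing $M^k_j=[a^k_j,b^k_j]$, inclusion means that for each $j$ the integer sequence $(a^k_j)_k$ is nondecreasing and $(b^k_j)_k$ is nonincreasing; both are bounded (between $0$ and $p_j$), and the boxes stay nonempty because $F$ maps nonempty boxes to nonempty boxes. Hence every coordinate stabilizes after finitely many steps, so there is $K$ with $M^k=M^K=:M$ for all $k\geq K$. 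Then $M=M^{K+1}=F(M^K)=F(M)$, so $M$ is a fixed point of $F$. If $M\in X$ it is a regular steady state, since $F$ and $f$ agree on $X$; otherwise $M\in X^\Box\setminus X$ is a symbolic steady state, as claimed.

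I expect the only delicate point to be the passage $F(\widetilde{M'})\subseteq\widetilde{M'}$, which must be extracted carefully from the convergence of the extension procedure rather than assumed; everything after that is monotone downward iteration in a finite lattice, where \emph{convergence} simply means eventual constancy of the integer bounds.
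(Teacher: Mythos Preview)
Your proposal is correct and follows essentially the same route as the paper: show $M^{k+1}\subseteq M^k$ by induction and conclude that a bounded monotone sequence of integer interval endpoints stabilizes at a fixed point of $F$. The only cosmetic differences are that you isolate the inclusion-monotonicity of $F$ as a separate lemma (the paper proves the inductive step inline by directly comparing $\min$/$\max$ over nested sets) and that you handle the base case $F(\widetilde{M'})\subseteq\widetilde{M'}$ uniformly via the fixed-point property of the extension procedure, whereas the paper splits into frozen components ($i\in I$) and non-frozen components ($i\in J(M^0)$)---but both reduce to the same observation you identify.
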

\begin{proof}
If the sequence converges to a limit $M$, clearly $F(M)=M$ follows from the definition of the sequence. The state $M$ is a regular or a symbolic steady state depending on the cardinality of $M$. We show convergence of $(M^k)_{k\in\mathbb{N}}$ by proving via induction that the sequence is decreasing monotonously with respect to the subset relation. That is, we show $M^{l+1}\subseteq M^l$ for all $l\in\mathbb{N}$.

For $i\in I$, we have $M_i^1=M_i^0$ by the definition of frozen components. If $i\in J(M^0)=\haken{n}\setminus I$, we have $M_i^1\subseteq M_i^0$ by the definition of the extended forward orbit.

Now, let $l\in\mathbb{N}$ and assume $M^{k}\subseteq M^{k-1}$ for all $k\leq l$. Recall that $M^{l+1}_i=F(M^l)_i=[\,\min_{x\in M^l}f_i(x),$ $\max_{x\in M^l}f_i(x)]$ and $M^{l}_i=F(M^{l-1})_i=[\,\min_{x\in M^{l-1}}f_i(x),\max_{x\in M^{l-1}}f_i(x)]$ for all $i\in\haken{n}$. Since $M^l\subseteq M^{l-1}$, we have $\min_{x\in M^{l-1}}f_i(x)\leq \min_{x\in M^l}f_i(x)\leq \max_{x\in M^l}f_i(x)\leq \max_{x\in M^{l-1}}f_i(x)$, and thus $M_i^{l+1}\subseteq M_i^l$ for all $i\in\haken{n}$.
\end{proof}
As mentioned above, for our running example $(\{2\}, (X_1,1,X_3))$ is the frozen core of the state $(X_1,1,X_3)$, where $X_1=\{0,1,2\}$ and $X_3=\{0,1\}$. Since the state coincides with its extended forward orbit, we start the sequence with $M^0=(X_1,1,X_3)=([0,2],1,[0,1])$, and $M^1=([1,2],1,[0,1])=M$ is the symbolic steady state derived from $(\{2\}, (X_1,1,X_3))$.
\section{Network analysis using modules}\label{ana}
In the following we want to determine network modules, such that the results of the module analysis can be directly used to obtain information on the behavior of the original system. Clearly, identification of such modules is generally only possible if we exploit at least some coarse knowledge of structural as well as dynamical characteristics of the original system. It turns out that the information inherent in a symbolic steady state $M$ is sufficient to determine network modules. We proceed by first associating a dynamical module with $M$, then derive a structural module, and finally define a network module suitable for utilization in network analysis.
\begin{figure}[tb]
\begin{center}
\includegraphics[width=\textwidth]{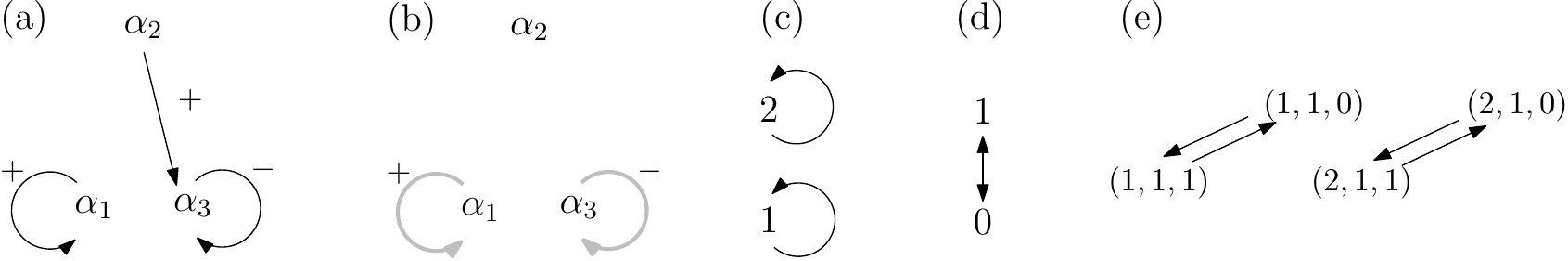}
\caption{\label{symb} Consider $f$ as given in Fig.~\ref{Ex}. In (a) the local interaction graph $G((1,1,0))$ of $f$. In (b) the graph $G(f|_M)$ for the symbolic steady state $M=([1,2],1,[1,0])$. Here, heavier gray edges indicate the two components of $G^\theta(M)$. In (c), (d) and (e) the state transition graphs $S(f^{Z_1})$, $S(f^{Z_2})$ and $S^M$, respectively.}
\end{center}
\end{figure}

Let us start by analyzing dynamical characteristics associated with a symbolic steady state $M$. If $x\in M$ is a regular state, then $f(x)\in M$ by definition of $F$. More precisely, $x_i+\sgn(f_i(x)-x_i)\in M_i$, since the interval bounded by $x_i$ and $f_i(x)$ is a subset of $M_i$. Thus, every trajectory starting in $M$ remains in $M$. We formulate this fact in the following statement.
\begin{prop}
If $M$ is a symbolic steady state, then the set of regular states represented by $M$ is a trap set.
\end{prop}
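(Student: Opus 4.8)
The plan is to check the two defining requirements of a trap set from the definition preceding Def.~\ref{STG}: that the underlying set of regular states is nonempty, and that it is closed under the asynchronous transition relation. Nonemptiness is immediate, since the regular states represented by $M=(M_1,\dots,M_n)$ form the product $M_1\times\dots\times M_n$ of the discrete intervals $M_i=[a_i,b_i]$ with $a_i\leq b_i$, each of which is nonempty. All the content of the statement therefore lies in proving single-step closure: every edge of $S(f)$ issuing from a regular state $x\in M$ must end in a state again lying in $M$. Once this is established, the trap-set property follows by an immediate induction along any trajectory $(x^0,x^1,\dots)$ with $x^0\in M$.

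For the closure step I would fix $x\in M$ and a successor $x'$ of $x$ in $S(f)$. If $x'=x=f(x)$ there is nothing to prove, so by Def.~\ref{STG} assume $x'_i=x_i+\sgn(f_i(x)-x_i)$ for the unique index $i$ with $x_i\neq f_i(x)$, while $x'_j=x_j\in M_j$ for every $j\neq i$. The crux is to show $x'_i\in M_i$. The key observation is that the asynchronous update moves the $i$-th coordinate by exactly one unit in the direction of $f_i(x)$, so $x'_i$ lies in the discrete interval with endpoints $x_i$ and $f_i(x)$. Now $x_i\in M_i$ because $x\in M$, and $f_i(x)\in M_i$ because the hypothesis $F(M)=M$ gives $f_i(x)\in[\min_{y\in M}f_i(y),\max_{y\in M}f_i(y)]=M_i$. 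Since $M_i$ is an interval containing both $x_i$ and $f_i(x)$, it contains every integer between them, and in particular $x'_i$. Hence $x'\in M$.

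The argument is short, and I expect the only point needing care to be this last interval step: one must make explicit that, because a single asynchronous transition changes a component by absolute value one toward its image, the new value never overshoots the segment spanned by $x_i$ and $f_i(x)$, so that convexity of the discrete interval $M_i$ can be invoked. The symbolic-steady-state hypothesis enters precisely here, through $F(M)=M$, to guarantee that the target value $f_i(x)$ stays inside $M_i$; without it the image could fall outside $M$ and closure would fail.
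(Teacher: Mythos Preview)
Your proposal is correct and follows essentially the same line as the paper: the paper notes, just before stating the proposition, that for $x\in M$ one has $f(x)\in M$ by definition of $F$, and more precisely $x_i+\sgn(f_i(x)-x_i)\in M_i$ since the interval bounded by $x_i$ and $f_i(x)$ is a subset of $M_i$. Your write-up simply makes the nonemptiness check and the interval-convexity step explicit.
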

By definition, the subgraph of $S(f)$ with vertex set $M$ and all edges in $S(f)$ with both end-vertices in $M$ is a dynamical module. The result above shows that this module is of significance for the dynamical analysis of $f$, since a trap set always contains at least one attractor.

In order to associate a structural module with $M$, we have to recognize
some important properties attached to the regular components of $M$. While symbolic components may be dynamically active in the trap set $M$, i.\,e., the components change their values along at least some trajectories in $M$, the regular components remain fixed regardless of the behavior of the symbolic components in $M$. This means that the network components with symbolic values, i.\,e., components in $J(M)$, do not influence the behavior of the network components with regular values in the trap set $M$. In turn, the influence of the regular on the symbolic components remains the same for all states in $M$. This motivates the following definition describing the structural modules associated with $M$.
\begin{defn}
Let $M$ be a symbolic steady state. By $G^\theta(M)$ we denote the (multi-)graph with vertex set $V^\theta(M)=J(M)$ and edge set $E^\theta(M)=\{(i,j,\varepsilon)\in E^M \mid i,j\in J(M)\}$, where $E^M$ denotes the edge set of $G(f|_M)$.  We call a graph $Z=(V^Z,E^Z)$ \emph{component} of $G^\theta(M)$, if the undirected graph derived from $Z$ is a maximal connected subgraph of the undirected graph derived from $G^\theta(M)$.
\end{defn}
Note that we use the global interaction graph $G(f|_{M})$ instead of the local interaction graph $G(M)$ in the definition. The difference is that we only consider edges derived from component value changes in $M$ instead of in $X$ (compare Def.~\ref{localig} and \ref{IG}), and thus capture all interactions functional in $M$. In particular, there are no edges originating in frozen components of $M$, since it is not possible to vary their value without leaving $M$.

To illustrate the above notions let us again consider the example introduced in Fig.~\ref{Ex}. We have seen that the state $M=([1,2],1,[1,0])$ is a symbolic steady state. In Fig.~\ref{Ex} (c) we can easily see that the set of regular states $x\in M$ is a trap set which contains both attractors of the system. The global interaction graph $G(f|_M)$ is shown in Fig.~\ref{symb} (b). We obtain $G^\theta(M)$ simply by eliminating the vertex $\alpha_2$. The two components of $G^\theta(M)$ are the loops originating in $\alpha_1$ resp.\ $\alpha_3$.

The components of $G^\theta(M)$ are the structural modules we associate with $M$. In preparation for our definition of a network module derived from a structural module we need to verify that network components belonging to different components of $G^\theta(M)$ do not influence each others behavior in $M$. This property is captured in the following lemma, which has already been proved under slightly different conditions in \cite{MCS,Siebert09-AB}.
\begin{lem}\label{core}
Let $M$ be a symbolic steady state, and let $Z_1,\dots, Z_k$ be the components of $G^\theta(M)$. Consider a union $Z$ of arbitrary components $Z_j$. Let $x,y\in M$ such that $x_i=y_i$ for all $i\notin Z$. Then $f_i(x)=f_i(y)$ for all $i\notin Z$.
In particular, for $M'\in X^\Box$ such that $M'_i=M_i$ for all $i\notin Z$ and $M'_i\subseteq M_i$ for $i\in Z$, we have $F_i(M')=F_i(M)=M_i=M'_i$ for all $i\notin Z$.
\end{lem}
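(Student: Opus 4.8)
The plan is to reduce both assertions to the single structural fact that, for every $i\notin Z$, the value $f_i(x)$ depends — as $x$ ranges over $M$ — only on the coordinates $x_j$ with $j\notin Z$. Granting this, the first claim is immediate: if $x,y\in M$ agree on all coordinates outside $Z$, they present $f_i$ with the same data, so $f_i(x)=f_i(y)$ for $i\notin Z$. The second claim then follows because $M'\subseteq M$ realizes exactly the same tuples $(x_j)_{j\notin Z}$ as $M$ (here $M'_j=M_j$ for $j\notin Z$, and each $M'_j\neq\emptyset$ for $j\in Z$), so the value sets $\{f_i(x):x\in M'\}$ and $\{f_i(x):x\in M\}$ coincide; hence $F_i(M')=F_i(M)$, and combining with $F(M)=M$ and $M'_i=M_i$ for $i\notin Z$ gives $F_i(M')=F_i(M)=M_i=M'_i$. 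To prove the structural fact I would split $\{1,\dots,n\}\setminus Z$ into the regular components $I$ and the symbolic components $J(M)\setminus Z$ (recall $Z\subseteq V^\theta(M)=J(M)$, so these two sets exhaust the complement of $Z$), and handle each separately.

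For $i\in I$ the argument is trivial: since $F(M)=M$, we have $[\min_{x\in M}f_i(x),\max_{x\in M}f_i(x)]=M_i$, and $M_i$ is regular, so this interval is a singleton. Thus $f_i$ is constant on $M$ and a fortiori independent of the coordinates in $Z$.

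For $i\in J(M)\setminus Z$ I would invoke the component structure of $G^\theta(M)$. Because $Z$ is a union of (undirected) components and $i$ lies in a different component, there is no edge of $G^\theta(M)$ joining any $j\in Z$ to $i$; since $j,i\in J(M)$, this means $(j,i,\varepsilon)\notin E^M$ for every $j\in Z$ and every $\varepsilon$. Read through Def.~\ref{localig} applied to $f|_M$, the absence of such an edge says precisely that for all $x\in M$ and all $c_j\in\{-1,+1\}$ with $x_j+c_j\in M_j$ one has $f_i(x+c_je_j)=f_i(x)$. As $M_j$ is a discrete interval, any value of $x_j$ can be reached from any other by unit steps that remain inside $M$, so $f_i$ is independent of coordinate $j$ on $M$; iterating over the finitely many $j\in Z$, altering one coordinate at a time (which keeps the state in the Cartesian product $M$), yields independence of $f_i$ from the entire block $Z$.

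The delicate point — and the step I expect to be the main obstacle — is exactly this passage from the local, single-step edge condition to genuine independence of $f_i$ on the whole block $Z$. It rests on two features that must be used explicitly: that $M$ is a product of discrete intervals, so that unit perturbations of a single coordinate never leave $M$ and any two states of $M$ agreeing off $Z$ are linked by such a path; and that the edges of $G^\theta(M)$ are inherited from $G(f|_M)$, which records value changes \emph{within} $M$ rather than in all of $X$, so that the non-existence of an edge truly encodes constancy of $f_i$ along those perturbations. Assembling the two cases establishes independence of $f_i$ on $Z$ for every $i\notin Z$, and the two stated conclusions then follow as in the opening paragraph.
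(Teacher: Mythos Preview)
Your proof is correct and follows essentially the same route as the paper: both split $\{1,\dots,n\}\setminus Z$ into the frozen components (where $F(M)=M$ forces $f_i$ to be constant on $M$) and the symbolic components outside $Z$ (where a unit-step path in $M$ connecting $x$ to $y$, combined with the absence of an edge from $Z$ to $i$ in $G(f|_M)$, yields $f_i(x)=f_i(y)$). The paper phrases the second case as a contradiction and leaves the ``in particular'' clause implicit, whereas you argue directly and spell out why $F_i(M')=F_i(M)$; but the mathematical content is the same.
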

\begin{proof}
For $i\notin J(M)$, i.\,e., for a frozen component $i$ of $M$, we have $f_i(x)=f_i(y)=F_i(M)$ since $x,y\in M$.

Let $i\in J(M)\setminus Z$, and assume $f_i(x)\neq f_i(y)$. We know that if $x_j\neq y_j$ then $j\in Z$. Since $x,y\in M$, we can define a sequence $(x=x^1,x^2,\dots,x^m=y)$ in $M$ such that $x^l$ and $x^{l+1}$ differ in one component only, which is in $Z$, and the corresponding component values differ by absolute value one. Since $f_i(x)\neq f_i(y)$, it follows that $f_i(x^l)\neq f_i(x^{l+1})$ for some $l<m$. According to Def.~\ref{localig} there exists an edge in $G(f|_M)$ from some component in $Z$ to $i$, which is a contradiction.
\end{proof}
The lemma shows that the frozen core of $M$ constitutes a boundary between the components of $G^\theta(M)$ that enables us to analyze their behavior in isolation from each other.
To do so, we now derive a network module from a component $Z$ of $G^\theta(M)$ by defining a function $f^Z$.
\begin{lem}\label{nwmod}
Let $M$ be a symbolic steady state, and let $Z=(V^Z,E^Z)$ be a component of $G^\theta(M)$.
Set $k^Z={\rm card}\,V^Z$. Let $\iota^Z$ be an order preserving bijection from $\haken{k}$ to $V^Z$. Set $X^Z=M_{\iota^Z(1)}\times\dots\times M_{\iota^Z(k)}$. We define $f^Z:X^Z\to X^Z$, $f^Z=\pi^Z\circ F\circ\rho^Z$, where $\rho^Z:X^Z\to X^\Box$ with $\rho_i^Z(z)=M_i$ for $i\notin Z$ and $\rho^Z_i(z)=z_{\iota^Z(i)}$ for $i\in Z$, and $\pi^Z:X^\Box\to X^Z$ is the projection on the components of $Z$.\\
Then $f^Z$ is an autonomous network module, and is called the \emph{network module of $f$ derived from $Z$}.
\end{lem}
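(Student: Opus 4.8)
The plan is to produce an explicit pair $(Y,I)$ witnessing that $f^Z$ is a network module in the sense of Def.~\ref{nwmoddef}, and then to check the two autonomy clauses. I would take $I := V^Z$ ordered by $\iota^Z$, and $Y := M$, where $M$ is identified with the set of regular states it represents. With this choice $\pi^I = \pi^Z$ and $\pi^I(Y) = \pi^Z(M) = \prod_{i \in V^Z} M_i = X^Z$, so the domain of $f^Z$ is exactly $\pi^I(Y)$, while its codomain $X^Z$ is contained in $\pi^I(X) = \prod_{i \in V^Z} X_i$ because $M_i \subseteq X_i$. It then remains to establish the module identity $\pi^I \circ f|_Y = f^Z \circ \pi^I|_Y$ (which will also show that $f^Z$ really maps into regular states), together with the two autonomy conditions.

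The core of the argument is the pointwise identity $F_i(\rho^Z(\pi^Z(x))) = f_i(x)$ for every $x \in M$ and every $i \in V^Z$. To prove it, fix such an $x$ and set $M' := \rho^Z(\pi^Z(x))$, so that $M'_i = x_i$ for $i \in V^Z$ and $M'_i = M_i$ otherwise; in particular $M' \subseteq M$ and $x \in M'$. Any regular state $w \in M'$ coincides with $x$ on $V^Z$ and on the frozen core of $M$ (where $M_i$ is a single value), so two such states $w, w'$ can differ only on $Z' := J(M) \setminus V^Z$, which is precisely the union of the components of $G^\theta(M)$ other than $Z$. I would then apply Lemma~\ref{core} to the union $Z'$: since $w, w' \in M$ agree on every coordinate outside $Z'$, the lemma yields $f_i(w) = f_i(w')$ for all $i \notin Z'$, and in particular for all $i \in V^Z$. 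Hence $f_i$ is constant on $M'$ for each $i \in V^Z$, and evaluating at $w = x \in M'$ gives $F_i(M') = [\,f_i(x),f_i(x)\,] = f_i(x)$.

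This single identity does most of the remaining work. Read coordinatewise over $i \in V^Z$ it says $f^Z(\pi^Z(x)) = \pi^Z(F(\rho^Z(\pi^Z(x)))) = \pi^Z(f(x))$, which is exactly the module equation $\pi^I \circ f|_Y = f^Z \circ \pi^I|_Y$; and since every such $F_i(M')$ is a single value, it also shows that $f^Z$ takes values in regular states, so $f^Z$ is a genuine map into $\pi^I(X)$. For autonomy, the product condition is immediate: $\pi^I(Y) = X^Z = \prod_{i \in V^Z} M_i$ is a product of the discrete intervals $M_i$. For the invariance $f^Z(X^Z) \subseteq X^Z$, I would use that $M$ is a symbolic steady state: from $M' \subseteq M$ and $F(M) = M$ I get $F_i(M') \subseteq F_i(M) = M_i$ for every $i$, so the (regular) value $f^Z(\pi^Z(x))_i = F_i(M')$ lies in $M_i$ for each $i \in V^Z$, placing the image back in $X^Z$. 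Together these verify both clauses of Def.~\ref{nwmoddef}.

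I expect the only genuinely delicate point to be the bookkeeping in the central identity: recognizing that the admissible variation of $w$ inside $M'$ is confined to the complementary union $Z'$ of components, so that Lemma~\ref{core} applies in its stated form --- for a union of components, not one involving the frozen core. Once that is in place, the module equation, the well-definedness of $f^Z$, and both autonomy clauses all follow by a routine coordinatewise reading together with the steady-state property $F(M) = M$.
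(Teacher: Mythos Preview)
Your proposal is correct and follows essentially the same route as the paper's proof: both take $Y=M$, $I=V^Z$, and establish the key identity $F_i(\rho^Z(\pi^Z(x)))=f_i(x)$ for $i\in V^Z$ by invoking Lemma~\ref{core}, then read off the module equation and the autonomy conditions. Your write-up is in fact a bit more explicit than the paper's on two points --- you spell out that Lemma~\ref{core} is being applied to the complementary union $Z'=J(M)\setminus V^Z$, and you give a separate argument for the invariance $f^Z(X^Z)\subseteq X^Z$ via $F_i(M')\subseteq F_i(M)=M_i$ --- whereas the paper compresses these into brief citations of Lemma~\ref{core} and the steady-state property.
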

\begin{proof}
The function $f^Z$ maps regular states to regular states, since $Z$ is disjoint from $J(M)\setminus Z$ in $G^\theta(M)$ and thus $F_i(\rho^Z(z))\in X_i$ for $z\in X^Z$, $i\in Z$, according to Lemma~\ref{core} and the definition of $F$.
We now show $\pi^Z\circ f|_M=f^Z\circ\pi^Z|_M$. Let $j\in\haken{k}$ and $x\in M$. To simplify notation we drop the superscript $Z$ from $\pi^Z$, $\rho^Z$ and $\iota^Z$. We have
$F_{\iota(j)}(\rho(\pi(x)))=F_{\iota(j)}(x)=f_{\iota(j)}(x)$ according to Lemma~\ref{core} and the definition of $\rho$ and $F$. Thus,
$(\pi^Z\circ f|_M(x))_j=f_{\iota(j)}(x)=F_{\iota(j)}(x)=F_{\iota(j)}(\rho(\pi(x)))=(\pi\circ F\circ\rho(\pi(x)))_j=f^Z_j(\pi(x))$.
Furthermore, since $M$ is a symbolic steady state $X^Z=\pi(M)$ satisfies the conditions regarding the domain of an autonomous network module given in Def.~\ref{nwmoddef}.
\end{proof}
It is easy to see that the global interaction graph $G(f^Z)$ is isomorphic to $Z$, so the structural module derived from $M$ matches the one derived from $f^Z$.
The dynamical modules derived from all functions $f^Z$, $Z$ component of $G^\theta(M)$, i.\,e., the state transition graphs $S(f^Z)$, constitute a breakdown of the coarse dynamical module, i.\,e., the trap set $M$, which we used to determine first the structural and then the network modules. We end this section by showing that these finer dynamical modules are building blocks of the dynamics of $f$. More specific, we show that we can compose the state transition graph of $f$ from the state transition graphs of the network modules $f^Z$ and the frozen components of $M$. Again, in the following we generalize results from \cite{MCS}. First, we define the composition of the graphs $S(f^Z)$.
\begin{defn}\label{prod}
Let $M\in X^\Box$ be a symbolic steady state, and let $Z_1,\dots,Z_m$ be the components of $G^\theta(M)$. 
We then denote by $S^M$ the graph with vertex set $M$ and edge set $E^M$. An edge $x^1\to x^2$ belongs to the edge set iff
\[
x^1=x^2,\,\,\mbox{ and}\quad\,\pi^{Z_j}(x^1)\to\pi^{Z_j}(x^2)\quad\mbox{belongs to $\,S(f^{Z_j})\,$ for all }\,j\in\haken{m},\vspace{.5ex}
\]
 or if there exists $j\in\haken{m}$ such that
 \[
 \pi^{Z_j}(x^1)\to \pi^{Z_j}(x^2)\,\mbox{ is an edge in }\,S(f^{Z_j})\,\mbox{ and }\,x_i^1=x_i^2\,\mbox{ for all }\,i\notin V^{Z_j}.
\]
We call $S^M$ the product state transition graph corresponding to $M$.
\end{defn}
The next theorem confirms that the method of composing the state transition graphs of the network modules renders the subgraph of $S(f)$ derived from the state set $M$.
\begin{thm}\label{STGconstr}
Let $M\in X^\Box$ be a symbolic steady state, and let $Z_1,\dots,Z_m$ be the components of $G^\theta(M)$. Let $S(f)|_M$ denote the subgraph of $S(f)$ with vertex set $M$ and all edges in $S(f)$ with both end-vertices in $M$. Then $S(f)|_M=S^M$.
\end{thm}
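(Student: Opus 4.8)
The plan is to prove equality of the two graphs by checking that they have the same vertex set and the same edge set. Both carry the vertex set $M$ (identifying the symbolic state $M$ with the set of regular states it represents), so everything reduces to showing that an ordered pair $(x^1,x^2)$ of states in $M$ is an edge of $S(f)|_M$ if and only if it is an edge of $S^M$. Throughout I would lean on three facts established earlier: the module relation $\pi^{Z_j}\circ f|_M=f^{Z_j}\circ\pi^{Z_j}|_M$ from Lemma~\ref{nwmod}, which yields $f^{Z_j}_{l}(\pi^{Z_j}(x))=f_{i}(x)$ whenever $i=\iota^{Z_j}(l)\in V^{Z_j}$ and $x\in M$; the fact that $V^{Z_1},\dots,V^{Z_m}$ partition $J(M)$, while the frozen core $\haken{n}\setminus J(M)$ consists of components on which $f$ is constant on $M$ (since $M$ is a symbolic steady state, $F_i(M)=M_i$ is a singleton for frozen $i$, so $f_i\equiv M_i$ there); and $X^{Z_j}=\pi^{Z_j}(M)$, so projecting a state of $M$ lands in the domain of $f^{Z_j}$.

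I would first treat the self-loops separately, as these are exactly what distinguishes the two clauses of Def.~\ref{prod}. A loop $x\to x$ lies in $S(f)|_M$ iff $f(x)=x$. For a frozen component $i$ this holds automatically because $f_i\equiv M_i=x_i$ on $M$; for $i\in V^{Z_j}$ the equation $f_i(x)=x_i$ translates, via the module relation, into $\pi^{Z_j}(x)$ being fixed by $f^{Z_j}$ in the corresponding coordinate. Hence $f(x)=x$ iff every projection $\pi^{Z_j}(x)$ is a steady state of $f^{Z_j}$, i.e.\ iff $\pi^{Z_j}(x)\to\pi^{Z_j}(x)$ is an edge of $S(f^{Z_j})$ for all $j$; this is precisely the first clause of Def.~\ref{prod}.

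Next I would handle the non-loop edges. If $x^1\to x^2$ is a non-loop edge of $S(f)|_M$, then by Def.~\ref{STG} exactly one component $i$ changes, with $x^2_i=x^1_i+\sgn(f_i(x^1)-x^1_i)$ and $x^1_i\neq f_i(x^1)$. This component cannot be frozen (on a frozen component $f$ equals the component value, forcing no change), so $i\in V^{Z_j}$ for a unique $j$; applying the module relation shows $\pi^{Z_j}(x^1)\to\pi^{Z_j}(x^2)$ is a genuine edge of $S(f^{Z_j})$, while $x^1$ and $x^2$ agree off $V^{Z_j}$ — exactly the second clause. Conversely, given a second-clause edge realized by a genuine transition of a single module $Z_j$, I would read off the changing coordinate $i=\iota^{Z_j}(l)$, use the module relation to recover $x^2_i=x^1_i+\sgn(f_i(x^1)-x^1_i)$ with $x^1_i\neq f_i(x^1)$, and note that all other coordinates are untouched; this is an $S(f)$-edge, and it stays inside $M$ because the moved coordinate remains in $M_i$ (the transition takes place within $X^{Z_j}=\pi^{Z_j}(M)$) and the rest of $x^1$ is unchanged.

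The step I expect to require the most care is the bookkeeping around loops. Read literally, the second clause of Def.~\ref{prod} would also fire when $\pi^{Z_j}(x^1)\to\pi^{Z_j}(x^2)$ is a loop of a single module, which would spuriously place $x^1\to x^1$ in $S^M$ even when another module is not at a fixed point; the decisive point is therefore to read the second clause as \emph{genuine} (non-loop) module transitions and let the first clause govern all loops. Once this is fixed, the only remaining subtlety is to verify that lifting a module transition keeps the global state in $M$ and that a single $f$-transition never mixes components of two different modules — both of which follow from the asynchronous rule moving exactly one component at a time, together with $X^{Z_j}=\pi^{Z_j}(M)$ and Lemma~\ref{core}.
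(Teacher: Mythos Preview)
Your argument is correct and follows essentially the same route as the paper's proof: both reduce the edge comparison to the coordinate identity $f_i(x)=f^{Z_{k^i}}_{l^i}(\pi^{Z_{k^i}}(x))$ for $i\in J(M)$ together with $f_i\equiv M_i$ on $M$ for frozen $i$, and then match the fixed-point and single-coordinate-update conditions on both sides. Your explicit split into the loop case (first clause of Def.~\ref{prod}) and the genuine-transition case (second clause) makes the bookkeeping more transparent than the paper's terse ``construct each edge from an edge in some $S(f^{Z_j})$ and vice versa,'' and your observation that the second clause must be read as governing only non-loop module edges is a valid caveat that the paper glosses over.
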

\begin{proof}
Recall that $M$ is a trap set, and that for $x,x'\in M$ there is an edge $x\to x'$ in $S(f)$ if and only if $x'=f(x)=x$ or $x'_i=x_i+\sgn(f_i(x)-x_i)$ for some $i\in\haken{n}$ satisfying $x_i\neq f_i(x)$, and $x_j'=x_j$ for all $j\neq i$.

First, we note that $f_i(x)=x_i=M_i$ for all $x\in M$ and $i\notin J(M)$. Furthermore, for $i\in J(M)$ there exists $k^i\in\haken{m}$ such that $i\in V^{Z_{k^i}}$, and there exists $l^i$ such that $\iota^{Z_{k^i}}(l^i)=i$, with $\iota^{Z_{k^i}}$ being the bijection introduced in Def.~\ref{nwmod}. As seen in the proof of Lemma~\ref{aut}, we have $f_i(x)=f^{Z_{k^i}}_{l^i}(\pi^{Z_{k^i}}(x))$.
Therefore, $x\in M$ is a fixed point of $f$ iff $\pi^{Z_j}(x)$ is a fixed point of $f^{Z_j}$ for all $j\in\haken{m}$, and $x_i+\sgn(f_i(x)-x_i)=(\pi^{Z_{k^i}}(x))_{l^i}+\sgn(f^{Z_{k^i}}_{l^i}(\pi^{Z_{k^i}}(x))-(\pi^{Z_{k^i}}(x))_{l^i})$. Thus, we can construct each edge in $S(f)|_M$ from an edge in some $S(f^{Z_j})$ and vice versa. It follows from Def.~\ref{prod} that $S(f)|_M=S^M$.
\end{proof}
The reasoning in the proof of Theorem~\ref{STGconstr} leads immediately to the following statement.
\begin{cor}\label{construction}
Let $M$ and $Z_1,\dots, Z_m$ be as in Theorem~\ref{STGconstr}. For all $i\in\haken{m}$ let $A_i$ be an attractor in $S(f^{Z_i})$. Then $A:=\{a\in M\mid \forall j\in\haken{m}\,:\,\pi^{Z_j}(a)\in A_j\}$ is an attractor in $S(f)$. Moreover, every attractor in $S(f)|_M$ can be represented in this manner as product of attractors in $S(f_{Z_j})$, $j\in\haken{m}$, and component values $M_i$ for $i\notin J(M)$.
\end{cor}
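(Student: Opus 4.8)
The plan is to reduce everything to the product state transition graph and exploit its block structure. By Theorem~\ref{STGconstr} we have $S(f)|_M=S^M$, so it suffices to argue in $S^M$; moreover $M$ is a trap set, hence every trajectory of $S(f)$ starting in $A\subseteq M$ stays in $S^M$. Note first that $A$ is nonempty: the blocks $Z_1,\dots,Z_m$ partition $J(M)$, so combining any choice of states $u_j\in A_j$ with the fixed values $M_i$ for the frozen components $i\notin J(M)$ yields a well-defined state of $M$ lying in $A$. I would then show that $A$ is a trap set. Suppose $a\in A$ and $a\to a'$ is an edge of $S^M$. If $a=a'$ there is nothing to prove; otherwise, by Def.~\ref{prod} there is a unique index $j_0$ with $\pi^{Z_{j_0}}(a)\to\pi^{Z_{j_0}}(a')$ an edge of $S(f^{Z_{j_0}})$ and $a_i=a_i'$ for all $i\notin V^{Z_{j_0}}$. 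Since the blocks are pairwise disjoint, $\pi^{Z_j}(a')=\pi^{Z_j}(a)\in A_j$ for $j\neq j_0$, while $\pi^{Z_{j_0}}(a')\in A_{j_0}$ because the attractor $A_{j_0}$ is in particular a trap set of $S(f^{Z_{j_0}})$. Hence $a'\in A$, so $A$ is a trap set.

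Next I would establish strong connectivity of $A$ in $S^M$. Given $a,b\in A$, each $A_j$ is strongly connected, so there is a path from $\pi^{Z_j}(a)$ to $\pi^{Z_j}(b)$ inside $A_j$ in $S(f^{Z_j})$. I would lift these paths block by block: starting from $a$, realize the $A_1$-path by single-coordinate moves that change only coordinates in $V^{Z_1}$ and leave all other coordinates fixed; by Def.~\ref{prod} each such move is an edge of $S^M$, and every intermediate state has its $Z_1$-projection in $A_1$ and all other projections equal to those of $a$, hence lies in $A$. Concatenating the lifts for $Z_1,\dots,Z_m$ in turn produces a path inside $A$ from $a$ to $b$. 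Thus $A$ is a trap set in which any two states are joined by a path, i.e.\ an attractor of $S(f)$ by definition.

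For the converse I would start from an arbitrary attractor $A$ of $S(f)|_M=S^M$ and set $A_j:=\pi^{Z_j}(A)$. Projecting the paths witnessing strong connectivity of $A$ and discarding the stationary steps shows each $A_j$ is strongly connected; and the trap-set property of $A_j$ follows by the lifting argument above, since any edge $\pi^{Z_j}(a)\to u'$ of $S(f^{Z_j})$ lifts to an edge $a\to a'$ of $S^M$ with $\pi^{Z_j}(a')=u'$, forcing $u'\in\pi^{Z_j}(A)$ because $A$ is a trap set. Hence each $A_j$ is an attractor of $S(f^{Z_j})$. It remains to identify $A$ with the product $A':=\{a\in M\mid\pi^{Z_j}(a)\in A_j\ \forall j\}$. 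The inclusion $A\subseteq A'$ is immediate, and the coordinates $i\notin J(M)$ are automatically fixed to $M_i$ on $A\subseteq M$. By the first part, $A'$ is itself an attractor of $S(f)|_M$; since $A\subseteq A'$ with $A$ a trap set and $A'$ strongly connected, no state of $A'\setminus A$ can be reachable from $A$, so $A'\setminus A=\emptyset$ and therefore $A=A'$.

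I expect the main obstacle to be the strong-connectivity/lifting step: one must verify both that each single-block move is genuinely an edge of $S^M$ (which is exactly what Def.~\ref{prod} supplies) and that all intermediate states of the concatenated path remain inside $A$, i.e.\ that freezing the other blocks never pushes a projection out of its attractor. The block-disjointness of the $Z_j$ together with the trap-set property of the $A_j$ are precisely what make this interleaving work; the identification in the \emph{moreover} part is then a routine consequence of the fact that an attractor contained in another attractor must equal it.
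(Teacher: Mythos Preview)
Your argument is correct and follows the same route the paper intends: both rely entirely on the identification $S(f)|_M=S^M$ from Theorem~\ref{STGconstr} and the block structure of $S^M$ given in Def.~\ref{prod}. The paper, however, does not spell any of this out; it simply declares the corollary immediate from the reasoning in the proof of Theorem~\ref{STGconstr}, whereas you have carefully verified the trap-set and strong-connectivity properties of the product set, the lifting of block paths, and the identification $A=A'$ via the standard ``an attractor contained in an attractor coincides with it'' argument.
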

We illustrate the results on our running example from Fig.~\ref{Ex}. The graph $G^\theta(M)$ for $M=([1,2],1,[0,1])$ has two components, $Z_1$ consisting of a positive loop on $\alpha_1$, and $Z_2$ being a negative loop on $\alpha_3$, as can be seen in Fig.~\ref{symb} (b). We define $f^{Z_1}:[1,2]\to[1,2]$ as in Def.~\ref{nwmod} with $\pi^{Z_1}((M'_1,M'_2,M'_3))=M'_1$ for all $M'\subseteq M$ and $\rho^{Z_1}(z)=(z,1,[0,1])$ for all $z\in[1,2]$. This reduces the function $f_1$ given in Fig.~\ref{Ex} (a) to $f^{Z_1}(z)=z$. The state transition graph $S(f^{Z_1})$ is given in Fig.~\ref{symb} (c) and consists of two steady states.  Analogously, we obtain $S(f^{Z_2})$, which comprises a single attractor of cardinality 2 as shown in Fig.~\ref{symb} (d). Applying Def.~\ref{prod} we obtain the graph $S^M$ which is shown in Fig.~\ref{symb} (e). Comparison with the state transition graph $S(f)$ given in Fig.~\ref{Ex} (c) illustrates Theorem~\ref{STGconstr} and its corollary.
\section{Networks with input layer}
\begin{figure}[tb]
\begin{center}
\includegraphics[width=.9\textwidth]{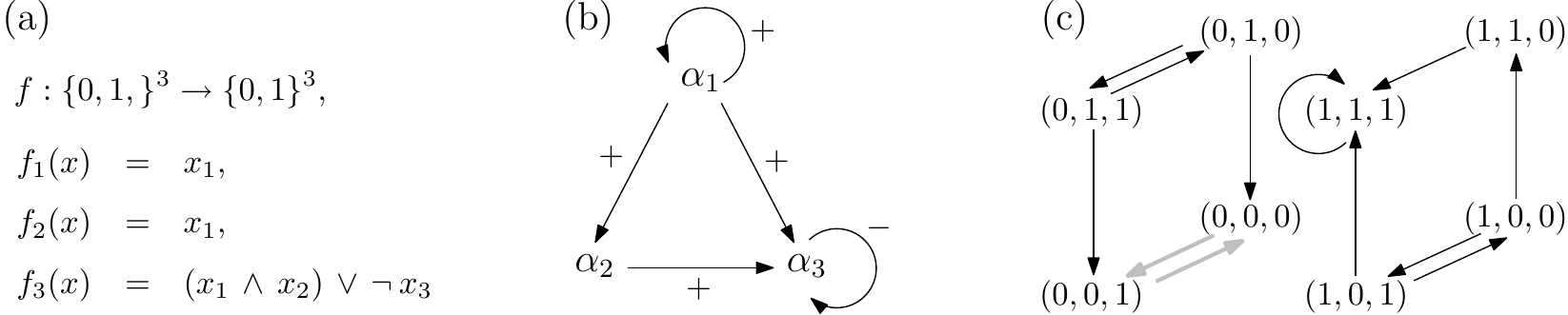}
\caption{\label{input} A network with input layer in (a), its global interaction graph in (b), and its state transition graph in (c). }
\end{center}
\end{figure}
In this section we introduce a class of networks, for which we can easily find a set of symbolic steady states such that all attractors of the original system can be constructed from the attractors of the network modules derived from the symbolic steady states. We extend results obtained in \cite{MCS}.
\begin{defn}
We call $f$ a \emph{network with input layer}, if there exists $i\in\haken{n}$ such that $f_i={\rm id}_{X_i}$. A vertex satisfying this condition is called \emph{input vertex}.
\end{defn}
For every input vertex $\alpha_i$, we have no incoming edges except a positive loop in the global interaction graph. However, this structural criterion is not sufficient for identifying an input vertex. Networks with input layer are well-suited for modeling, for example, signal transduction networks, where receptors may be modeled as input vertices.

In the following we assume that $f$ is a network with input layer. Without loss of generality we assume that $\alpha_1,\dots,\alpha_k$, $k\in\haken{n}$, are the input vertices of $f$.

\begin{figure}[tb]
\begin{center}
\includegraphics[width=.65\textwidth]{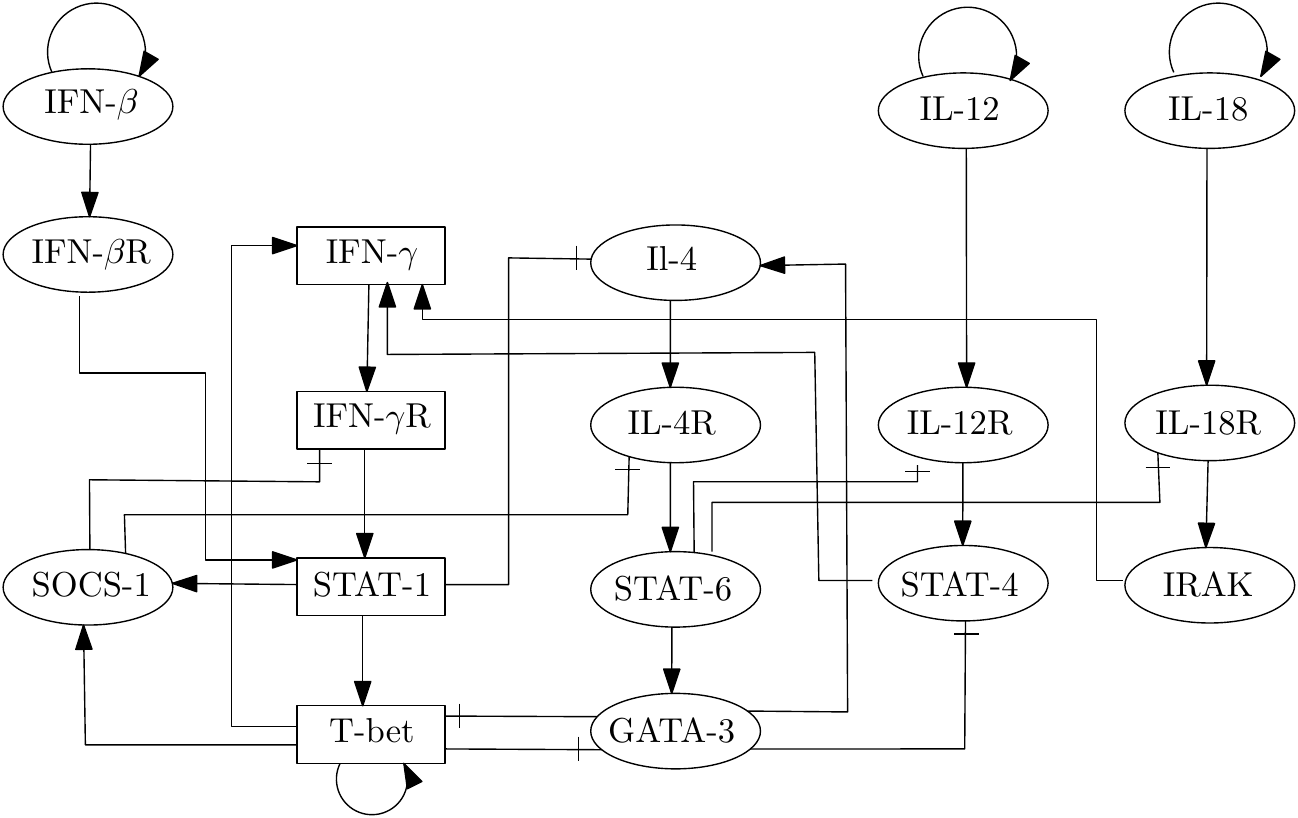}
\caption{\label{Tcell} Global interaction graph of the Th cell differentiation network introduced in \cite{Mendoza}. Arrows represent activation, crossed lines represent inhibition. }
\end{center}
\end{figure}
We can immediately note one important property of $f$. The frozen core of a symbolic state $M'\in X^\Box$ with $M'_i\in X_i$ for $i\in\haken{k}$ and $M'_i=X_i$ for $i>k$ is given by $(\haken{k},M')$, since $M'$ is its own extended orbit. In other words, we can easily derive a (symbolic) steady state from each combination of values for the input vertices using Theorem~\ref{frozen}. The resulting set of symbolic and possibly regular steady states is sufficient to determine all attractors of $f$.
\begin{thm}
Let $A$ be an attractor of $f$. Then there exist input values $x_i\in X_i$, $i\in\haken{k}$ such that either $A=M$ is a regular steady state, or we can construct $A$ from $M$ as shown in Cor~\ref{construction}, where $M$ is the fixed point derived from $(\haken{k},M')$, $M'\in X^\Box$ with $M'_i=x_i$ for $i\in\haken{k}$ and $M'_i=X_i$ for $i>k$.
\end{thm}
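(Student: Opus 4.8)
The plan is to show that every attractor $A$ of $f$ is contained, as a set of regular states, in exactly one of the symbolic steady states obtained by fixing the input values, and then to read off the product decomposition from Cor.~\ref{construction}. First I would exploit that input components are frozen along \emph{every} trajectory: for an input vertex $i$ the asynchronous update gives $x_i+\sgn(f_i(x)-x_i)=x_i+\sgn(x_i-x_i)=x_i$, so no edge of $S(f)$ changes coordinate $i$. Since $A$ is strongly connected, all its states therefore agree on the input coordinates; call these common values $x_1,\dots,x_k$. I then form $M'\in X^\Box$ with $M'_i=x_i$ for $i\in\haken{k}$ and $M'_i=X_i$ for $i>k$. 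As already noted in the text, $M'$ equals its own extended forward orbit and its symbolic coordinates are full ranges, so its frozen core is exactly $(\haken{k},M')$; Theorem~\ref{frozen} then yields the derived steady state $M$, the limit of the decreasing sequence $M^0=\widetilde{M'}=M'$, $M^{j}=F(M^{j-1})$.

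The core of the argument is the claim $A\subseteq M$, which I would prove by induction showing $A\subseteq M^{j}$ for all $j$. The base case $A\subseteq M^0=M'$ is immediate, since the states of $A$ carry the input values $x_i$ and otherwise lie in the full ranges $X_i$. For the inductive step, assume $A\subseteq M^{j}$ and fix $a\in A$ and a coordinate $i$; I must place $a_i$ inside $M^{j+1}_i=F(M^j)_i=[\min_{x\in M^j}f_i(x),\max_{x\in M^j}f_i(x)]$. Writing $\alpha_i=\min_{a'\in A}a'_i$ and $\beta_i=\max_{a'\in A}a'_i$, the definitional chain $\alpha_i\le a_i\le\beta_i$ reduces the task to the two outer bounds $\min_{x\in M^j}f_i(x)\le\alpha_i$ and $\beta_i\le\max_{x\in M^j}f_i(x)$, for which it suffices to exhibit states of $A$ (which lie in $M^j$ by the hypothesis) whose $f_i$-value reaches these extremes.

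This last point is the step I expect to be the \textbf{main obstacle}, because an edge into an extremal state need not change coordinate $i$, so one must argue existence of an actually executed transition in coordinate $i$. I would handle it by a cut argument: if $\alpha_i<\beta_i$, partition $A$ into the nonempty sets where coordinate $i$ equals $\beta_i$ and where it is smaller; strong connectivity forces a crossing edge $c\to c'$ with $c'_i=\beta_i>c_i$, and since a transition changes one coordinate by $\pm1$ this means $c_i=\beta_i-1$ with an upward move, whence $f_i(c)>c_i$ gives $f_i(c)\ge\beta_i$ with $c\in A\subseteq M^j$; the symmetric cut handles $\alpha_i$. If instead coordinate $i$ is constant on $A$ at value $\gamma_i$, then no state of $A$ can have $f_i(a)\ne\gamma_i$ (such a state would admit an outgoing edge changing coordinate $i$ to a new value, contradicting that $A$ is terminal and $i$ constant), so $f_i(a)=\gamma_i=\alpha_i=\beta_i$ for every $a\in A$; input coordinates fall under this constant case. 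In all cases $a_i\in M^{j+1}_i$, completing the induction, and hence $A\subseteq\bigcap_j M^j=M$.

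To finish, I would split on the nature of $M$. If $J(M)=\emptyset$, then $M$ is a single regular state with $f(M)=M$, and $A\subseteq\{M\}$ with $A\ne\emptyset$ forces $A=M$, a regular steady state. Otherwise $M$ is a symbolic steady state; its set of regular states is a trap set, so the terminal strongly connected component $A$ of $S(f)$, being contained in $M$, is also a terminal strongly connected component of the subgraph $S(f)|_M$, i.e.\ an attractor of $S(f)|_M$. By Theorem~\ref{STGconstr} we have $S(f)|_M=S^M$, and Cor.~\ref{construction} then represents $A$ as the product of attractors of the module state transition graphs $S(f^{Z_j})$ together with the frozen values $M_i$ for $i\notin J(M)$, which is exactly the asserted construction from $M$.
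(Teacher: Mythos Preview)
Your proof is correct, but it reaches the key inclusion $A\subseteq M$ by a genuinely different route than the paper. The paper does not argue directly with the attractor; instead it proves the stronger reachability statement that from \emph{every} state in $M'$ there is a path in $S(f)$ into $M$. It does this level by level along the decreasing sequence $(M^l)$: for $x^0\in M^l\setminus M^{l+1}$ one has $f(x^0)\in M^{l+1}$, and since $M^{l+1}$ is a product of intervals, updating any coordinate $i$ with $x^0_i\notin M^{l+1}_i$ strictly decreases the distance to $M^{l+1}$ while updates in coordinates already inside $M^{l+1}_j$ cannot leave it; iterating yields a path into $M^{l+1}$, and then into $M$. From this it follows that $M'\setminus M$ contains no trap set, hence no attractor, so $A\subseteq M$.

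Your argument instead fixes the attractor and pushes it through the sequence by induction, using only that $A$ is a terminal strongly connected component: the cut argument manufactures, for each coordinate, a state of $A$ whose $f_i$-value witnesses the needed extremum. This is cleaner for the bare statement of the theorem and sidesteps the path-construction entirely. The paper's approach buys you more information (a basin-type statement: every state with the given input values can reach $M$), which is of independent dynamical interest; your approach buys a shorter, more self-contained proof of exactly what is claimed. Both finish identically via the case split on $J(M)$ and the invocation of Theorem~\ref{STGconstr} and Cor.~\ref{construction}.
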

\begin{proof}
Since $\alpha_1\dots\alpha_k$ are input vertices, we have $a_i=a'_i$ for all $i\in\haken{k}$, and we set $M_i'=a_i$ for the input vertices. Next, we show that for every state $x\in X$ with $x_i=M_i'$ for all $i\in\haken{k}$, there exists a trajectory leading to $M$.

Let $(M^l)_{l\in\mathbb{N}_0}$ with $M^0:=\widetilde{M'}$ and $M^l:=F(M^{l-1})$ for all $l\in\mathbb{N}$ be the sequence converging to $M$ introduced in Theorem~\ref{frozen}. Recall that $M^l\subseteq M^{l-1}$ for all $l\in\mathbb{N}$. Let $x^0\in M'\setminus M$. Then there exists $l\in\mathbb{N}_0$ such that $x^0\in M^l\setminus M^{l+1}$. Then, according to the definition of $F$ and since $(M^l)_{l\in\mathbb{N}_0}$ is decreasing with respect to the subset relation, either all successors of $x^0$ in $S(f)$ are in $M^l\setminus M^{l+1}$ or there exists a successor in $M^{l+1}$. In the latter case, we label that successor $x^1$. Otherwise, we repeat the procedure for all successor of $x^0$, check again and if necessary repeat again. Since all images of all states in $M^l$ lie in $M^{l+1}$ and since the state space is finite, we eventually find  a state $x^1\in M^{l+1}$ such that there exists a path from $x^0$ to $x^1$ with all states of the path except $x^1$ lying in $M^l\setminus M^{l+1}$. Since $(M^l)_{l\in\mathbb{N}_0}$ is converging to $M$ we can thus construct a path from each state in $M'$ to $M$.

It follows that there is no trap set, and thus no attractor, in $M'\setminus M$. If $M$ is a regular steady state, then $M=A$ is the only attractor in $M'$. Otherwise $A$ is a composition of attractors of network modules derived from $M$ and the frozen components of $M$ as shown in Cor~\ref{construction}.
\end{proof}
We illustrate the results on the simple Boolean network given in Fig.~\ref{input} which has one input vertex, namely $\alpha_1$. We start calculating the two (symbolic) steady states from the input values $x_1=0$ and $x_1=1$. In the first case, we get $F((0,[0,1],[0,1]))=(0,0,[0,1])$ and $F((0,0,[0,1]))=(0,0,[0,1])$, i.\,e., $(0,0,[0,1])$ is a symbolic steady state, and the associated structural module is a negative loop originating in $\alpha_3$. When looking at the corresponding network module, we obtain the function $f^Z:[0,1]\to[0,1]$, $f^Z(z)=\neg\,z$, and the derived attractor for the original network is the set $\{(0,0,0),(0,0,1)\}$. For the input value $x_1=1$ we get the sequence $((1,[0,1],[0,1]),(1,1,[0,1]),(1,1,1), (1,1,1),\dots)$, so the procedure renders a regular steady state of the system. As can be seen in Fig.~\ref{input}, the regular steady state and the attractor derived from $(0,0,[0,1])$ are the only attractors of the system.
\section{Analyzing Th cell differentiation}\label{Tcellsect}
T helper cells, short Th cells, are important players in the vertebrate immune system. They can be sub-classified in Th1 and Th2 cells, which are involved in different immune responses. Both originate from a common precursor, promote their own differentiation and inhibit proliferation of each other. In \cite{Mendoza} L. Mendoza proposes a model for a control network of Th cell differentiation consisting of 17 components, 13 of which are represented by Boolean variables while the remaining 4 components have three activity levels. The logical rules governing the behavior of the system are given in Table~\ref{Tcelltable} and the global interaction graph can be seen in Fig.~\ref{Tcell}. Note that the model depicted in Fig.~\ref{Tcell} and Table~\ref{Tcelltable} differs slightly from the model introduced in \cite{Mendoza}, namely we altered the logical functions associated with the vertices  IFN-$\beta$, IL-12 and Il-18. In Mendoza's model, the three components are modeled with constant functions with value 0, representing the wild type in some sense (see \cite{Mendoza}). The constant values are changed when considering specific artificial environmental conditions. We model these vertices as input vertices and consider for the wild type the situation where all input values are zero. Clearly, the attractors of both models coincide. Modeling IFN-$\beta$, IL-12 and IL-18 as input vertices also makes sense from a biological point of view since all three vertices represent substances not reproduced by Th cells. If we want to mimic experimental conditions where cells are cultured in media saturated with one or more of these substances, we can easily do so by focussing on the part of state space where one or more of the input vertices have value one. Note that we have no further input vertices in our model.
\begin{table}[t]
\begin{center}{\small
\begin{tabular}{lll}
IFN-$\beta$ & $X_1=\{0,1\}$ &  $f_1(x)=x_1$ \\[1ex]
IL-12 & $X_2=\{0,1\}$ &  $f_2(x)=x_2$ \\[1ex]
IL-18 & $X_3=\{0,1\}$ &  $f_3(x)=x_3$ \\[1ex]
IFN-$\beta$R & $X_4=\{0,1\}$ &  $f_4(x)=x_1$ \\[1ex]
IFN-$\gamma$ & $X_5=\{0,1,2\}$ &  $f_5(x)=1$ if $(x_{16}=1\,\wedge\,\neg\,(x_{14}=1\,\wedge\, x_{15}=1))\,\vee\,(x_{14}=1\,\wedge\,x_{15}=x_{16}=0)$, \\[.5ex]
& & $f_5(x)=2$ if $x_{16}=2\,\vee\,(x_{14}=1\,\wedge\, x_{15}=1)$, and $f_5(x)=0$ otherwise\\[1ex]
IL-4R & $X_6=\{0,1\}$ &  $f_6(x)=1$ if $x_{12}=0\,\wedge\, x_{17}=1$, and $f_6(x)=0$ otherwise\\[1ex]
IFN-$\gamma$R & $X_7=\{0,1,2\}$ &  $f_7(x)=1$ if $x_5=1\,\vee\,(x_5=2\,\wedge\, x_{11}=1)$, \\[.5ex]
& & $f_7(x)=2$ if $x_5=2\,\wedge\, x_{11}=0$, and $f_7(x)=0$ otherwise\\[1ex]
IL-4R & $X_8=\{0,1\}$ &  $f_8(x)=x_6\,\wedge\,\neg\, x_{11}$ \\[1ex]
IL-12R & $X_9=\{0,1\}$ &  $f_9(x)=x_2\,\wedge\,\neg\, x_{13}$ \\[1ex]
IL-18R & $X_{10}=\{0,1\}$ &  $f_{10}(x)=x_3\,\wedge\,\neg\, x_{13}$ \\[1ex]
SOCS-1 & $X_{11}=\{0,1\}$ &  $f_{11}(x)=1$ if $x_{12}\geq 1\,\vee\, x_{16}\geq 1$, and $f_{11}(x)=0$ otherwise \\[1ex]
STAT-1 & $X_{12}=\{0,1,2\}$ &  $f_{12}(x)=1$ if $(x_4=1\,\wedge\, x_7=0)\,\vee\,x_7=1$, \\[.5ex]
& & $f_{12}(x)=2$ if $x_7=2$, and $f_{12}(x)=0$ otherwise\\[1ex]
STAT-6 & $X_{13}=\{0,1\}$ &  $f_{13}(x)=x_8$ \\[1ex]
STAT-4 & $X_{14}=\{0,1\}$ &  $f_4(x)=x_9\,\wedge\,\neg\,x_{17}$ \\[1ex]
IRAK & $X_{15}=\{0,1\}$ &  $f_{15}(x)=x_{10}$ \\[1ex]
T-bet & $X_{16}=\{0,1,2\}$ &  $f_{16}(x)=1$ if $(x_{17}=0\,\wedge\,((x_{12}=1\,\wedge\,x_{16}\leq 1)\,\vee\,(x_{12}\leq 1\,\wedge\,x_{16}=1)))$\\[.5ex]
& & \hspace*{5.5em}$\vee\,(x_{17}=1\,\wedge\,x_{16}=1\,\wedge\,x_{12}=1)$, \\[.5ex]
& & $f_{16}=2$ if $(x_{17}=0\,\wedge\,(x_{12}=2\,\vee\,x_{16}=2))\,\vee\,(x_{17}=1\,\wedge\,x_{12}=1\,\wedge\,x_{16}=2)$,\\[.5ex]
& & $f_{16}(x)=0$ otherwise\\[1ex]
GATA-3 & $X_{17}=\{0,1\}$ &  $f_{17}(x)=1$ if $x_{13}=1\,\wedge\,x_{16}=0$, and $f_{17}(x)=0$ otherwise\\[1ex]
\end{tabular}}
\caption{\label{Tcelltable} Coordinate functions and ranges for the components of the Th cell network.}
\end{center}
\end{table}

For the wild type, i.\,e. the situation where all input values are set to zero, Mendoza identifies four attractors all of which are fixed points of the function $f$ given in Table~\ref{Tcelltable}. Each one has a clear biological interpretation \cite{Mendoza}. We now want to apply our analysis technique using symbolic steady states to the wild type.

We fix the values of the input vertices to zero and as a first step determine the corresponding symbolic steady state, that is, the symbolic steady state derived from the frozen component set $(\{1,2,3\},\, x_1=x_2=x_3=0)$. Iterating the state $M^0:=(0,0,0,[0,1],[0,2],[0,1],[0,2],[0,1],[0,1],[0,1],[0,1],[0,2],[0,1],[0,1],$ $[0,1],[0,2],[0,1])$ we get\vspace{1.5ex}\newline
\begin{tabular}{lll}
$M^1:= $&\hspace*{-1.5em}$f(M^0)=$&\hspace*{-.5em}$(0,0,0,0,[0,2],[0,1],[0,2],[0,1],0,0,[0,1],[0,2],[0,1],[0,1],[0,1],[0,2],[0,1])$,\\[1ex]
$M^2:= $&\hspace*{-1.5em}$f(M^1)=$&\hspace*{-.5em}$(0,0,0,0,[0,2],[0,1],[0,2],[0,1],0,0,[0,1],[0,2],[0,1],0,0,[0,2],[0,1])$,\\[1ex]
$f(M^2)=$&\hspace*{-.5em}$M^2$.&
\end{tabular}\vspace{1.5ex}\newline
We obtain a symbolic steady state with 8 regular components, and no further constraints on the remaining components. The local interaction graph $G^\theta(M)$ is shown in Fig.~\ref{Tcellsub} (a). Analysis of the corresponding subnetwork renders four fixed points, namely $(0,0,0,0,0,0,0,0,0)$, $(1,0,1,0,1,1,0,1,0)$, $(2,0,1,0,1,1,0,2,0)$, $(0,1,0,1,0,0,1,0,1)\in X_5\times X_6\times X_7\times X_8\times X_{11}\times X_{12}\times X_{13}\times X_{16}\times X_{17}$. The steady states of the original network derived from these fixed points match the four steady states found in \cite{Mendoza}.
The state space of the original model consists of 663552 states. Fixing the input values still leaves us with 82944 states to consider. The state space of the structural module associated with the symbolic steady state $M^2$ contains only 2592 states.

Not all of the combinations of input values render a significant simplification of the network analysis. In the worst case, for example if we choose input values $x_1=0$, $x_2=x_3=1$, we can only derive the value for $x_4$ but no further constraints on structure and behavior of the system. This in itself is of course an interesting observation from a biological point of view, since in that case we can deduce that cross-regulation plays an important role at an early stage of signal transduction.

On the other hand, some combinations of input values lead to very small network modules. Let us as a last example consider the input values $x=1$, $x_2=x_3=0$, representing an overabundance of IFN-$\beta$. Starting with  $M^0:=(1,0,0,[0,1],[0,2],[0,1],[0,2],[0,1],[0,1],[0,1],[0,1],[0,2],[0,1],[0,1],[0,1],$ $[0,2],[0,1])$ we get\vspace{1.5ex}\newline
\begin{tabular}{lll}
$M^1:=$&\hspace*{-.5em}$f(M^0)=$&\hspace*{-.5em}$(1,0,0,1,[0,2],[0,1],[0,2],[0,1],0,0,[0,1],[0,2],[0,1],[0,1],[0,1],[0,2],[0,1])$,\\[1ex]
$M^2:=$&\hspace*{-.5em}$f(M^1)=$&\hspace*{-.5em}$(1,0,0,1,[0,2],[0,1],[0,2],[0,1],0,0,[0,1],[1,2],[0,1],0,0,[0,2],[0,1])$,\\[1ex]
$M^3:=$&\hspace*{-.5em}$f(M^2)=$&\hspace*{-.5em}$(1,0,0,1,[0,2],0,[0,2],[0,1],0,0,1,[1,2],[0,1],0,0,[0,2],[0,1])$,\\[1ex]
$M^4:=$&\hspace*{-.5em}$f(M^3)=$&\hspace*{-.5em}$(1,0,0,1,[0,2],0,[0,1],0,0,0,1,[1,2],[0,1],0,0,[0,2],[0,1])$,\\[1ex]
$M^5:=$&\hspace*{-.5em}$f(M^4)=$&\hspace*{-.5em}$(1,0,0,1,[0,2],0,[0,1],0,0,0,1,1,0,0,0,[0,2],[0,1])$,\\[1ex]
$M^6:=$&\hspace*{-.5em}$f(M^5)=$&\hspace*{-.5em}$(1,0,0,1,[0,2],0,[0,1],0,0,0,1,1,0,0,0,[0,2],0)$,\\[1ex]
$M^7:=$&\hspace*{-.5em}$f(M^6)=$&\hspace*{-.5em}$(1,0,0,1,[0,2],0,[0,1],0,0,0,1,1,0,0,0,[1,2],0)$,\\[1ex]
$M^8:=$&\hspace*{-.5em}$f(M^7)=$&\hspace*{-.5em}$(1,0,0,1,[1,2],0,[0,1],0,0,0,1,1,0,0,0,[1,2],0)$,\\[1ex]
$M^9:=$&\hspace*{-.5em}$f(M^8)=$&\hspace*{-.5em}$(1,0,0,1,[1,2],0,1,0,0,0,1,1,0,0,0,[1,2],0)$,\, $f(M^9)=M^9$.
\end{tabular}\vspace{1.5ex}\newline
We obtain a two-component module consisting of a positive loop on T-bet and an activating edge from T-bet to IFN-$\gamma$ which can be seen in Fig.~\ref{Tcellsub} (b). Both components of the module originally have three activity levels, but are both constrained to levels 1 and 2 in the module dynamics. Thus, we only have to analyze a state space of cardinality four instead of a state space consisting of 82944 states. The module has two steady states, namely $(x_5,x_{16})=(1,1)$ and $(x_5,x_{16})=(2,2)$, which translate to two steady states in the original network. Again, this is in agreement with the results in \cite{Mendoza} (supplementary material).

Application of our analysis method to this model thus offers two advantages. First, the complexity of the analysis of the dynamics is reduced, since we only have to focus on the smaller network modules. Furthermore, identification of the modules themselves is of interest, since they represent the part of the system responsible for the decision of the system's fate, i.\,e., which attractor is reached. An interesting next step would then be to check whether the mathematically derived network modules coincide with subsystems of known biological importance.
\begin{figure}[tb]
\begin{center}
\includegraphics[width=.65\textwidth]{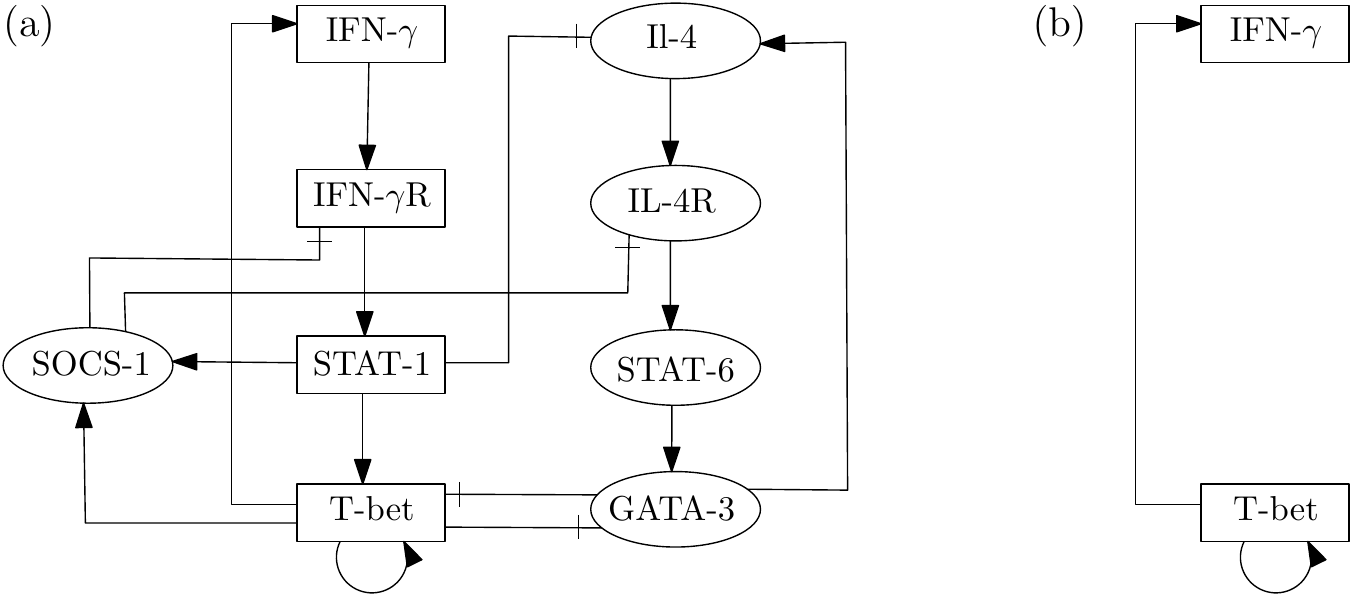}
\caption{\label{Tcellsub} Subnetworks of the Th cell differentiation network associated with the symbolic fixed points derived from the input values $x_1=x_2=x_3=0$ in (a) and $x_1=1$, $x_2=x_3=0$ in (b).}
\end{center}
\end{figure}
\section{Conclusion}
In this paper we introduced a method to analyze discrete regulatory networks utilizing suitable network modules. We used the notion of symbolic steady state, which allows us to determine such network modules using coarse dynamical and subsequently structural modules derived from the symbolic steady state. Lastly, we can associate network modules with the structural modules exploiting the properties of the frozen core of the symbolic steady state. We then can construct the dynamics of the original network, and in particular its attractors, in a subset of state space explicitly from the state transition graphs of the network modules. This paper not only gives a rigorous definition of different aspects of modularity but notably extends results in \cite{MCS,Siebert09-AB}. In particular, the detailed analysis of the Th cell network becomes possible because of the refined notion of symbolic steady state.

A variety of aspects provide possibilities for fruitful future work. Firstly, we want to focus on further options for easily computing symbolic steady states. Here, we introduced a method suited for networks with input-layer, but the resulting symbolic steady states might not be minimal with respect to the subset relation on symbolic steady states, and thus the resulting network modules can possibly be further refined. We anticipate results when we focus on certain classes of functions $f$ describing networks, in particular (nested) canalyzing functions \cite{Kauffman,Laubenbacher07}. A different direction of interest is to use the network modules not only for obtaining the system's dynamics but also for a refined stability analysis. Perturbations resulting in changes in the dynamics of $f$ might not be noticeable in every network module (see \cite{Irons}). Such considerations are of similar interest when using the synchronous update strategy, which is also often utilized in discrete modeling, so a translation of our results to synchronous update networks seems worthwhile. Furthermore, we need to compare our results to other well-established modularization techniques that aim at reducing the analysis complexity such as the modular response analysis of biochemical networks \cite{Kholodenko, Bruggeman}. Although the underlying modeling approaches are very different, some ideas may be transferable or render complementing results.
Lastly, we plan to apply our methods to further biological examples. Here, a comparison of the network modules and their associated structural and dynamical modules with subsystems of known biological importance could lead to interesting insights.

\bibliographystyle{eptcs} 
\bibliography{CompLit}
\end{document}